\documentclass[a4paper,11pt]{article}
\usepackage{hyperref}
\usepackage[top=1 in,bottom=1 in,left=1.0 in,right=1.0 in]{geometry}
\usepackage{float}
\usepackage{cite}
\usepackage{epic,graphicx}
\usepackage{subfigure}
\usepackage{amssymb,amsmath,amsthm,amscd}
\usepackage{mathrsfs}
\usepackage{tikz}

\usepackage{color}
\newcommand{\bred}{\begin{color}{red}}
\newcommand{\ecl}{\end{color}}
\newcommand{\bblue}{\begin{color}{blue}}
\newcommand{\bgre}{\begin{color}{green}}
\newcommand{\bora}{\begin{color}{orange}}

\def\m@th{\mathsurround=0pt}
\mathchardef\bracell="0365
\def\upbrall{$\m@th\bracell$}
\def\undertilde#1{\mathop{\vtop{\ialign{##\crcr
    $\hfil\displaystyle{#1}\hfil$\crcr
     \noalign
     {\kern1.5pt\nointerlineskip}
     \upbrall\crcr\noalign{\kern1pt
   }}}}\limits}

\def \h#1{\widehat{#1}}
\def \t#1{\widetilde{#1}}
\def \b#1{\overline{#1}}
\def \d#1{\dot{#1}}
\def \tb#1{\widetilde{\overline{#1}}}
\def \th#1{\widehat{\widetilde{#1}}}
\def \hb#1{{\widehat{\overline{#1}}}}

\def \thb#1{\widehat{\widetilde{\overline{#1}}}}

\newtheorem{lemma}{Lemma}

\newtheorem{theorem}{Theorem}
\newtheorem{remark}{Remark}
\newcommand{\bs}{\boldsymbol}

\newcommand{\bsb}{\begin{subequations}}
\newcommand{\esb}{\end{subequations}}

\newcommand{\bK}{\boldsymbol{K}}

\newcommand{\bM}{\boldsymbol{M}}
\newcommand{\br}{\boldsymbol{r}}
\newcommand{\bc}{\boldsymbol{c}}
\newcommand{\bI}{\boldsymbol{I}}
\newcommand{\bT}{\boldsymbol{T}}

\usepackage{mathrsfs}
\usepackage{graphicx}
\usepackage{bm}

\begin{document}

\title{Symmetric discrete AKP and  BKP equations }
\author{Shangshuai Li$^1$,~Frank W. Nijhoff$^2$,
~Ying-ying Sun$^3$, ~Da-jun Zhang$^1$\footnote{Corresponding author. Email: djzhang@staff.shu.edu.cn}\\
{\small $~^1$Department of Mathematics, Shanghai University, Shanghai 200444,  China}\\
{\small $~^2$School of Mathematics, University of Leeds, Leeds LS2 9JT, United Kingdom}\\
{\small $~^3$Department of Mathematics, University of Shanghai for Science and Technology, Shanghai 200093,  China}}

\maketitle

\begin{abstract}
We show that when  KP (Kadomtsev-Petviashvili) $\tau$ functions allow special symmetries,
the discrete BKP equation can be expressed as a linear combination of the discrete AKP equation
and its reflected symmetric forms.
Thus the discrete AKP and BKP equations can share  the same $\tau$ functions with these symmetries.
Such a connection is extended to 4 dimensional   (i.e. higher order)  discrete AKP  and BKP equations in the corresponding
discrete hierarchies.
Various explicit forms of such  $\tau$ functions, including Hirota's form, Gramian, Casoratian
and polynomial, are given.
Symmetric $\tau$ functions of Cauchy matrix form that are composed of Weierstrass $\sigma$ functions
are  investigated. As a result we obtain a  discrete BKP equation with elliptic coefficients.

\begin{description}
\item[PACS numbers:] 02.30.Ik, 02.30.Ks, 05.45.Yv
\item[Keywords:] discrete AKP, discrete BKP, symmetric  $\tau$ function, solution, elliptic function
\end{description}
\end{abstract}

\section{Introduction}\label{sec-1}

The Kadomtsev-Petviashvili (KP) equation is one of the most famous (2+1)-dimensional integrable systems.
In discrete case, the discrete AKP (dAKP) equation and BKP (dBKP) equation are two master equations in the KP family.
With parameterised coefficients, they are, respectively,
\begin{equation}\label{AKP}
    A \doteq    (a-b)\th\tau \b\tau+(b-c)\hb \tau \t \tau+(c-a)\tb \tau \h \tau=0,
\end{equation}
and
\begin{align}
   B \doteq   &  (a-b)(b-c)(c-a)\th{\b \tau} \tau+(a-b)(a+c)(b+c)\th \tau \b \tau  \notag  \\
        &~~~ +(b-c)(b+a)(c+a)\hb\tau \t \tau+(c-a)(c+b)(a+b)\tb\tau \h \tau=0.   \label{BKP}
\end{align}
Here $a,b,c$ are spacing parameters and tilde, hat, bar serve as notations of shifts in different directions (see \eqref{shift}).
An alternative form of Eq.(1) is
\begin{equation}\label{AKP-alt}
        c(a-b)\th\tau \b\tau+a(b-c)\hb \tau \t \tau+b(c-a)\tb \tau \h \tau=0,
\end{equation}
which is connected with \eqref{AKP} by changing
\begin{equation}\label{abc}
       (a, b, c) \longrightarrow (1/a, 1/b, 1/c).
\end{equation}
Note that Eq.(2) does not change under the above replacement.
The coefficients in both the dAKP and dBKP equation can be arbitrary nonzero numbers if we do not require
$\tau=1$ is a solution.
In fact, all the coefficients $z_i$ of the following dAKP equation
\begin{equation}\label{AKP-z}
       z_1 \hb \tau \t \tau+ z_2\tb \tau \h \tau + z_3 \th\tau \b\tau =0,
\end{equation}
can be gauged to be $1$ by $\tau\to z_1^{-ml}z_2^{-nl}z_3^{-nm}\tau'$ (cf.\cite{Nimmo-JPA-1997,Nimmo-CSF-2000}),
and for the dBKP equation
\begin{equation}\label{BKP-z}
       z_1 \hb \tau \t \tau+ z_2\tb \tau \h \tau + z_3 \th\tau \b\tau + z_4 \th{\b \tau} \tau =0,
\end{equation}
the transformation is (cf.\cite{NimS-PRLSA-1997})
\begin{equation}\label{gauge-bkp}
\tau\to \biggl(\frac{z_2z_3}{z_1z_4}\biggr)^{\frac{ml}{2}}\biggl(\frac{z_1z_3}{z_2z_4}\biggr)^{\frac{nl}{2}}
\biggl(\frac{z_1z_2}{z_3z_4}\biggr)^{\frac{nm}{2}}\tau'.
\end{equation}

Eq.\eqref{AKP} originated from Hirota's discrete analogue of the generalized Toda equation (DAGTE) \cite{Hirota-JPSJ-1981}
that was parameterised later by Miwa \cite{Miwa-PJA-1982} for the sake of expression of $N$-soliton solutions.
It is also known as the discrete KP equation, the Hirota equation, or the Hirota-Miwa equation.
Note that the DAGTE and the dAKP equation are equivalent in the sense that there exist a set of parameter transformations
to transform them to each other \cite{Hirota-RIMS-2014}.
Equation \eqref{BKP} was first given by Miwa \cite{Miwa-PJA-1982} and now bears his name.
It also appears as a nonlinear superposition formula of the (2+1) dimensional sine-Gordon system \cite{NimS-PRLSA-1997}.
Both equations have Lax triads \cite{Nimmo-JPA-1997,Nimmo-CSF-2000,NimS-PRLSA-1997},
and both equations are 4D consistent \cite{AdlBS-IMRN-2012,AdlBS-CMP-2003}.

Both the dAKP and dBKP equations have $N$-soliton solutions, which are possible to be written out from
those of the continuous AKP and BKP equation (cf.\cite{DatKM-PJA-1981})
by means of Miwa's transformation \cite{Miwa-PJA-1982}.
Let us present these solutions by the following uniform formula,
\begin{equation}\label{tau}
\tau=\sum_{\mu = 0,1}\mathrm{exp}
\left[\sum^N_{j=1}\mu_j \eta_j+ \sum^{N}_{1\leq i<j}\mu_i\mu_j a_{ij} \right],
\end{equation}
where $\mathrm{e}^{a_{ij}}=A_{ij}$,  and the summation over $\mu$ means to take all
possible $\mu_j= {0,1}$ $(j = 1,2,?????? ,N)$.
For the dAKP equation \eqref{AKP},
\begin{equation}
\mathrm{e}^{\eta_i}=\biggl(\frac{a-q_i}{a-p_i}\biggr)^{n}\biggl(\frac{b-q_i}{b-p_i}\biggr)^{m}
\biggl(\frac{c-q_i}{c-p_i}\biggr)^{l}\eta_i^{(0)},
~~A_{ij}=\frac{(p_i-p_j)(q_i-q_j)}{(p_i-q_j)(q_i-p_j)},
\label{PWF-AKP}
\end{equation}
while for the dBKP equation \eqref{BKP},
\begin{subequations}\label{PWF-BKP}
\begin{align}
&\mathrm{e}^{\eta_i}=\biggl[\frac{(a-p_i)(a-q_i)}{(a+p_i)(a+q_i)}\biggr]^{n}
\biggl[\frac{(b-p_i)(b-q_i)}{(b+p_i)(b+q_i)}\biggr]^{m}
\biggl[\frac{(c-p_i)(c-q_i)}{(c+p_i)(c+q_i)}\biggr]^{l}\eta_i^{(0)},\\
&A_{ij}=\frac{(p_i-p_j)(p_i-q_j)(q_i-p_j)(q_i-q_j)}{(p_i+p_j)(p_i+q_j)(q_i+p_j)(q_i+q_j)},
\end{align}
\end{subequations}
where $\eta_i^{(0)}\in \mathbb{C}$.
For the continuous AKP and BKP hierarchies, their $\tau$ functions can often  be connected as \cite{DatJKM-PD-1982}
\begin{equation}\label{Pfaffian}
\tau^2_{\mathrm{BKP}}=\tau_{\mathrm{AKP}}^{}|_{\mathrm{odd}}
\end{equation}
where the right hand side means the AKP $\tau$ function  with all the even-index coordinates removed.
  The relation \eqref{Pfaffian}  is thought to be true for the dAKP and dBKP under Miwa's coordinates, but has
strictly been shown only for
special (mostly soliton type) solutions.
The perfect square structure implies solutions in terms of Pfaffians for the (discrete) BKP hierarchy
\cite{FuN-PRSA-2017,Hirota-JPSJ-1989,TsujH-JPSJ-1996}.

In this paper, we would like to investigate a different connection between the dAKP equation and dBKP equation.
We will show that when $\tau_{\mathrm{AKP}}^{}$ has some special symmetries (see \eqref{sym})
these two equations can share the same $\tau$ function.
We will see that with these symmetries
the dBKP equation can be expressed as a linear combination of the dAKP equation
and its symmetric deformations.
Such a connection can also be extended to 4 dimensional (4D) dAKP  and dBKP equations, i.e. the higher order
equations in the corresponding discrete hierarchies\footnote{These equations are not really four-dimensional, as
their integrabiity only holds modulo the lower equations in the hierarchy.}.
Various explicit forms of such  $\tau$ functions will be given. In a sense these solutions of the dBKP
equation could be considered to be `reducible' as they are also simultaneously solutions of
the dAKP equation\footnote{This is reminiscent of the reducible, i.e. special, solutions of the Painlev\'e equations which hold
for special parameter values of those equations, for which also simultanesouly a lower order equation is valid
and which can be expressed in terms of classical special functions.}.
We will also investigate symmetric $\tau$ functions in Cauchy matrix form that are composed of Weierstrass $\sigma$ functions.
As  its most significant consequence, a dBKP equation with elliptic coefficients will be obtained,
which is conjectured to be the canonical form for the elliptic solutions (including those which are
not shared with the dAKP equation) of the dBKP equation.

The paper is organized as follows.
In Sec.\ref{sec-2} we discuss symmetric $\tau$ functions  and connections between dAKP, dBKP,
4D dAKP and 4D dBKP equations.
In Sec.\ref{sec-3} various explicit forms of symmetric $\tau$ functions are given,
including Hirota's form, Gramian, Casoratian and polynomial.
Then in Sec.\ref{sec-4} we investigate symmetric $\tau$ functions composed of Weierstrass $\sigma$ function
and present  a dBKP equation with elliptic coefficients.
Finally, conclusions are given in Sec.\ref{sec-5}.

\section{Discrete AKP  and  BKP}\label{sec-2}

Let $f(n_1,n_2,\cdots)$ be a function $f: \mathbb{Z}^{+\infty} \mapsto \mathbb{C}$,
and $(a_1, a_2, \cdots)$ corresponding spacing parameters associated with each lattice variable $(n_1,n_2, \cdots)$.
Define shift operator $E_{n_i}$ by
\[E_{n_i} f(n_1,n_2,\cdots)= f(n_1,\cdots, n_{i-1}, n_i+1, n_{i+1}, \cdots).\]
For convenience, we denote $(n_1,n_2,n_3,n_4)=(n,m,l,h)$, $(a_1, a_2, a_3, a_4)=(a,b,c,d)$,
and the shifts by
\begin{equation}
\t f=f(n+1,m,l,h),~ \h f=f(n,m+1,l,h), ~\b f=f(n,m,l+1,h),~ \d f=f(n,m,l,h+1).
\label{shift}
\end{equation}
We also employ the following notations with spacing parameters when necessary,
\[\tau(n,m,l)\doteq \tau((n,a),m,l) \doteq \tau(n,(m,b),l) \doteq \tau(n,m,(l,c)).\]

The celebrated Sato theory \cite{Sato-1981,MiwJD-2000} introduces plane wave factors
 $\mathrm{e}^{\xi(\mathbf{t},k)}$ with a universal dispersion relation
(here $\mathbf{t}=(t_1,t_2,\cdots)$)
\begin{equation}
\xi(\mathbf{t},k)=\sum_{i=1}^{\infty}k^it_i.
\label{xi}
\end{equation}
Miwa introduced, for instance, for the dAKP equation, the discrete plane wave factors \cite{Miwa-PJA-1982}
\begin{equation}\label{pwf}
\prod_{i=1}^{\infty}\Bigl(\frac{1-a_ip}{1-a_iq}\Bigr)^{n_i}
\end{equation}
which can be  rewritten in a continuous exponential function form by  $\mathrm{e}^{-\xi(\mathbf{x},p)+\xi(\mathbf{x},q)}$,
where
\begin{equation}
x_j=\frac{1}{j}\sum^{\infty}_{i=1}n_i a_i^j
\end{equation}
and $\mathbf{x}=(x_1, x_2, \cdots)$ are usually referred to as Miwa's coordinates.
Discrete integrable systems can be investigated by means of Sato's theory with Miwa's coordinates.

\subsection{Reflection symmetric $\tau$ function, dAKP  and  dBKP}\label{sec-2-1}

Suppose that $\tau(n,m,l)$ has the following symmetries
\begin{subequations}\label{sym}
\begin{align}
& \tau((n,a),m,l)=\tau((-n,-a),m,l),\label{sym-a}\\
& \tau(n,(m,b),l)=\tau(n,(-m,-b),l),\label{sym-b}\\
& \tau(n,m,(l,c))=\tau(n,m,(-l,-c)).\label{sym-c}
\end{align}
\end{subequations}
If $\tau$ is a solution of the dAKP equation \eqref{AKP}, then  the above symmetries lead to the
reflected dAKP equations, respectively,
\begin{subequations}\label{ssym}
\begin{align}
A_1& \doteq (b-c)\th{\b \tau}\tau+(c+a)\th \tau \b\tau-(a+b)\tb \tau \h\tau=0, \label{ssym-a}\\
A_2& \doteq (c-a)\th{\b \tau}\tau+(a+b) \hb \tau \t \tau-(b+c)\th \tau\b \tau=0, \label{ssym-b}   \\
A_3& \doteq (a-b)\th{\b \tau}\tau+(b+c)\tb \tau \h \tau-(c+a)\hb \tau \t \tau=0. \label{ssym-c}
\end{align}
\end{subequations}

In fact, if $\tau$ has  symmetry \eqref{sym-a}, it then follows from \eqref{AKP} that
\[  (-a-b)\undertilde{\h\tau} \b\tau+(b-c)\hb \tau \undertilde\tau+(c+a)\undertilde{\b \tau} \h \tau=0.\]
By a up shift in tilde direction we immediately get \eqref{ssym-a}. \eqref{ssym-b}
and \eqref{ssym-c} can be derived in a similar way.

Then we present the following connection between the dAKP and dBKP equations.

\begin{theorem}\label{T:1}
 If  $\tau$ satisfies the dAKP equation \eqref{AKP} and has symmetries \eqref{sym},
 then  $\tau$ satisfies the dBKP equation \eqref{BKP}.
\end{theorem}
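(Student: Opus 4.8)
The plan is to exhibit the dBKP polynomial $B$ of \eqref{BKP} as an explicit linear combination, with coefficients rational in $a,b,c$, of the dAKP polynomial $A$ and its reflected companions $A_1,A_2,A_3$ of \eqref{ssym}. Since $A=0$ by hypothesis, and since $A_1=A_2=A_3=0$ follow from $A=0$ together with the symmetries \eqref{sym} (as established in the discussion preceding the theorem), any such combination vanishes identically, and therefore $B=0$.

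To find the combination I would regard the four bilinear forms $\th{\b\tau}\,\tau$, $\th\tau\,\b\tau$, $\hb\tau\,\t\tau$ and $\tb\tau\,\h\tau$ -- the ones appearing throughout \eqref{BKP} and \eqref{ssym} -- as formal indeterminates, so that $A,A_1,A_2,A_3,B$ become coefficient vectors over $\mb{Q}(a,b,c)$. Equating the coefficients of these four monomials turns the desired relation into a small linear system for the unknown multipliers; solving it gives, for example,
\[
   B=(a-c)(b+c)\,A_1+(b-c)(a+c)\,A_2 ,
\]
which one then verifies directly, term by term. There is some latitude in the multipliers, owing to the identities $(b+c)A_1+(c+a)A_2+(a+b)A_3\equiv 0$ and $(a+b)A\equiv(a-c)A_1+(b-c)A_2$; these let one rewrite the combination using $A$ and $A_3$ as well, if desired.

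The only genuine content is that such a combination exists at all. The two identities just mentioned show that the coefficient vectors of $A,A_1,A_2,A_3$ span merely a two-dimensional subspace of the space generated by the four monomials above, so the theorem amounts to the purely algebraic fact that the coefficient vector of $B$ lies in that subspace. I expect this compatibility check to be the crux; once it is confirmed, writing down the explicit multipliers is routine bookkeeping. The same strategy, with the 4D dAKP polynomial and its reflected forms in place of $A,A_1,A_2,A_3$, is what I would use for the announced extension to the 4D (higher-order) equations.
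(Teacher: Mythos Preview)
Your proposal is correct and follows essentially the same approach as the paper: exhibit $B$ as an explicit linear combination of $A,A_1,A_2,A_3$ over $\mb{Q}(a,b,c)$, and conclude. The paper's proof uses the symmetric four-term identity
\[
3B=(c-a)(a-b)A_1+(a-b)(b-c)A_2+(b-c)(c-a)A_3+(a^2+b^2+c^2+3ab+3bc+3ca)A,
\]
whereas you go directly to a two-term combination $B=(a-c)(b+c)A_1+(b-c)(a+c)A_2$; the paper itself records such two-term alternatives (and the linear dependencies among $A,A_1,A_2,A_3$ you identify) in Remark~\ref{R:1}.
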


\begin{proof}
For the dAKP $\tau$ function, when it  has symmetries  \eqref{sym},
$\tau$ satisfies \eqref{AKP} and \eqref{ssym} simultaneously. Direct calculation gives rise to
\begin{align}
&(c-a)(a-b)\times A_1 + (a-b)(b-c)\times A_2 + (b-c)(c-a)\times A_3 \nonumber\\
& + (a^2+b^2+c^2+3ab+3bc+3ca)\times A  \nonumber\\
=~& 3 \times B.\label{BKPA}
\end{align}
\end{proof}

\begin{remark}\label{R:1}
Among the dAKP equation \eqref{AKP} and the reflected dAKP equations \eqref{ssym},
$A, A_1, A_2, A_3$ are not linearly independent.
Apart from the relation
\[A_1+A_2+A_3=A,\]
any element in  $\{A, A_1, A_2, A_3\}$
can be a linear combination of any two elements of the same set.
For example,
\[A=\frac{a-c}{a+b}A_1+\frac{b-c}{a+b}A_2,~~
A_3=-\frac{b+c}{a+b}A_1-\frac{a+c}{a+b}A_2.\]
This indicates that there are alternative  expressions of \eqref{BKPA}
in terms of only two elements of  $\{A, A_1, A_2, A_3\}$.
For example, we have
\begin{equation}
B=(a+b) (a+c) A-(a-b) (a-c) A_1=(a-c) (b+c) A_1-(b-c) (a+c) A_2,
\label{BKPA-alt}
\end{equation}
which is simpler than \eqref{BKPA}.
\end{remark}

\begin{remark}\label{R:2}
  Note that the symmetries can be extended to
\begin{subequations}\label{sym'}
\begin{align}
& \tau((n,a),m,l)=\gamma_1 A_1^nB_1^mC_1^l \,\tau((-n,-a),m,l),\label{sym'-a}\\
& \tau(n,(m,b),l)=\gamma_2 A_2^nB_2^mC_2^l \,\tau(n,(-m,-b),l),\label{sym'-b}\\
& \tau(n,m,(l,c))=\gamma_3 A_3^nB_3^mC_3^l \,\tau(n,m,(-l,-c)),\label{sym'-c}
\end{align}
\end{subequations}
where $A_i, B_i, C_i, \gamma_i$ are nonzero constants.
Due to the gauge property of discrete bilinear equations (e.g.\cite{HieZ-JPA-2009}),
Theorem \ref{T:1} and \eqref{ssym} are still valid if replacing \eqref{sym} with \eqref{sym'}.
\end{remark}

\subsection{4D  dAKP}\label{sec-2-2}

The 4D dAKP equation is given in \cite{OhtHTI-JPSJ-1993} via a compact form
\begin{equation}\label{4DAKP}
\left |
\begin{array}{cccc}
a^2 & a & 1 & \t \tau \d{\hb \tau} \\
b^2 & b & 1 & \h \tau \d{\tb \tau} \\
c^2 & c & 1 & \b \tau \d{\th \tau} \\
d^2 & d & 1 & \d \tau \thb \tau
\end{array}
\right |=0,
\end{equation}
which has an explicit expression
\begin{align}
   & (b-c)(c-d)(b-d)\t\tau  \d{\hb \tau} -(a-c)(c-d)(a-d)\h \tau \d{\tb \tau} \notag \\
   & ~~~~ +(a-b)(b-d)(a-d)\b \tau \d{\th \tau} -(a-b)(b-c)(a-c)\d \tau \thb \tau =0.\label{4dAKP}
\end{align}
The dAKP equation \eqref{AKP} is 4D consistent \cite{AdlBS-IMRN-2012}.
We note that the above 4D dAKP equation is a result of 4D consistency of \eqref{AKP}.
In fact, embedding eight copies of the dAKP equations on a hypercube (see Fig.\ref{F:1})
and then calculating a triplly shifted $\tau$ with proper initial points,
(e.g. calculating $\d{\th \tau}$ with initials $\t \tau, \h\tau, \b\tau, \d\tau, \tb\tau, \d{\t\tau}, \d{\tb \tau},  \d{\hb \tau}$),
one gets the 4D dAKP equation \eqref{4dAKP}.
In addition to \eqref{4dAKP},  the dAKP equation defined on three elementary cubes can yield another
4D lattice equation (cf. Eq.(29) in \cite{AdlBS-IMRN-2012})
\begin{equation}\label{4dAKP-A0}
\mathcal{A}_0\doteq (a-b)(c-d)\th\tau \d{\b\tau}-(a-c)(b-d)\d{\h\tau}\tb\tau+(a-d)(b-c)\d{\t\tau}\hb\tau.
\end{equation}

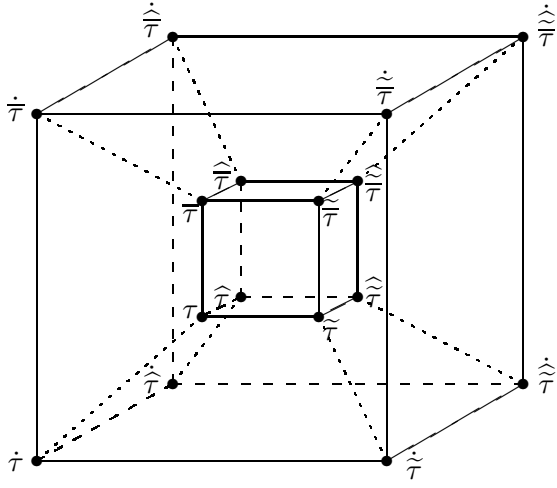
\begin{figure}
\setlength{\unitlength}{0.0010in}
\begin{picture}(2400,2400)(-1500,-50)
  \put(1800,  0){\circle*{60}} \put(0  ,1800){\circle*{60}}
  \put( 700, 400){\circle*{60}} \put(700,2200){\circle*{60}}
  \put(  0,  0){\circle*{60}}  \put(1800,1800){\circle*{60}}
  \put( 2500,400){\circle*{60}}  \put(2500, 2200){\circle*{60}}
  \put( 0,  0){\line(1,0){1800}}
  \put( 0,1800){\line(1,0){1800}}
  \put(700,2200){\line(1,0){1800}}
  \put(  0, 0){\line(0,1){1800}}
  \put(1800, 0){\line(0,1){1800}}
  \put(2500,400){\line(0,1){1800}}

\drawline(  0,1800)(700,2200)
\drawline(1800,1800)(2500, 2200)
\drawline(1800,  0)(2500, 400)
\dashline{50}(0,  0)(700, 400)
 \dashline{50}(700,400)(0,0)
  \dashline{50}(700,400)(2500,400)
\dashline{50}(700,400)(700,2200)

\drawline(850,  750)(1450 ,750)
\drawline(1650, 850)(1450 ,750)
\drawline(850,  750)(850,1350)
\drawline(1450,  1350)(850,1350)
\drawline(1450,  1350)(1450,750)
\drawline(1050,  1450)(850,1350)
\drawline(1050,  1450)(1650,1450)
\drawline(1450,  1350)(1650,1450)
\drawline(1650,  850)(1650,1450)
\dashline{50}(850,  750)(1050 ,850)
\dashline{50}(1650,  850)(1050 ,850)
\dashline{50}(1050,  1450)(1050 ,850)

\put(850,  750){\circle*{60}} \put(1450 ,750){\circle*{60}}
  \put(1650, 850){\circle*{60}} \put(850,1350){\circle*{60}}
  \put(1450,  1350){\circle*{60}}  \put(1650,1450){\circle*{60}}
  \put(1050,  1450){\circle*{60}}  \put(1050 ,850){\circle*{60}}
\dashline{18}(1800,  0)(1450,  750)
\dashline{18}(0,  0)(850,  750)
\dashline{18}(700,  400)(1050, 850)
\dashline{18}(2500,400)(1650, 850)
\dashline{18}(2500,2200)(1650, 1450)
\dashline{18}(700,  2200)(1050, 1450)
\dashline{18}(1800,  1800)(1450,  1350)
\dashline{18}(0,  1800)(850,  1350)
      \put(-150,-50){$\dot \tau$}
         \put(1900, -95 ){$\dot{\t \tau}$}
      \put(1470,  630){$\t \tau$}
     \put(750,  740){$\tau$}

   \put(550,  350){$\dot{\h{\tau}}$}
     \put(910, 800){$\h \tau$}
      \put(2580,350){$\dot{\th \tau}$}
     \put(1690, 800){$\th \tau$}

     \put(2580,2150){$\dot{\thb{\tau}}$}
     \put(1690, 1350){$\thb \tau$}
      \put(550,  2200){$\dot{\hb \tau}$}
     \put(910, 1410){$\hb \tau$}

     \put(1750,  1850){$\dot{\tb{\tau}}$}
     \put(1470,  1220){$\tb \tau$}
      \put(-150,  1760){$\dot{\b \tau}$}
     \put(750,  1240){$\b \tau$}
\end{picture}
\vskip 0.3cm
\caption{4D  hypercube}\label{F:1}
\end{figure}

With regard to solutions, by virtue of 4D consistency,
if we  consistently extend the dAKP plane wave factor in \eqref{PWF-AKP}  to the 4th dimension,
then the resulted $\tau$ function will be a solution to the 4D equations \eqref{4dAKP} and \eqref{4dAKP-A0} as well.

\subsection{4D  dBKP}\label{sec-2-3}

The 4D dBKP equation is (see Eq.(2.4) in \cite{TsujH-JPSJ-1996})
\begin{align}
& (a-b)(a-c)(a-d)(b-c)(b-d)(c-d)\d{\thb \tau}\tau
-(a-b)(a+c)(a+d)(b+c)(b+d)(c-d)\th \tau \d{\b\tau}\nonumber\\
&+ (a+b)(a-c)(a+d)(b+c)(b-d)(c+d)\tb \tau \d{\h\tau}
- (a+b)(a+c)(a-d)(b-c)(b+d)(c+d)\hb \tau \d{\t\tau}\nonumber\\
= & \,0.
\label{BKP-4D}
\end{align}

There are at least three ways by which the above equation is connected with known lattice equations.
First, this equation is  a result of the 4D consistency of the dBKP equation \eqref{BKP} \cite{AdlBS-CMP-2003}.
Second, a more precise relation between the 4D dBKP and 3D dBKP can be given.
Denoting the  3D dBKP equation \eqref{BKP} by dBKP$(n,m,l)=0$ and
the 4D dBKP equation \eqref{BKP-4D} by 4DdBKP$(n,m,l,h)=0$, then we have (cf.\cite{Vek-JPA-2019})
\begin{align*}
& -\d \tau \times \mathrm{4DdBKP}(n,m,l,h)\\
~& (a+b)(a+c)(a-d)\d{\t \tau}\times \mathrm{dBKP}(h,m,l)
+(a+b)(b+c)(b-d)\d{\h \tau}\times \mathrm{dBKP}(n,h,l) \\
& + (a+c)(b+c)(c-d)\d{\b \tau}\times \mathrm{dBKP}(n,m,h)
+(a-d)(b-d)(c-d) \tau \times E_h \mathrm{dBKP}(n,m,l).
\end{align*}
The third way is the connection with the 4D dAKP equation \eqref{4dAKP} when $\tau$
allows  symmetries \eqref{sym} and
\begin{equation}
\tau(n,m,l,(h,d))=\tau(n,m,l,(-h,-d)).
\label{sym-d}
\end{equation}
With these symmetries the 4D dAKP equation \eqref{4dAKP} yields
\begin{align*}
\mathcal{A}_1 &\doteq (b-c)(c-d)(b-d) \tau  \d{\thb \tau} -(a+c)(c-d)(a+d)\th \tau \d{\b \tau}  \\
   & ~~~~ +(a+b)(b-d)(a+d)\tb \tau \d{\h \tau}-(a+b)(b-c)(c+a)\d{\t \tau} \hb \tau =0,\\
\mathcal{A}_2 &\doteq (a-c)(c-d)(a-d)  \tau \d{\thb \tau}-(b+c)(c-d)(d+b)\th\tau  \d{\b \tau}   \\
   & ~~~~ +(a+b)(b+d)(a-d)\hb \tau \d{\t \tau}-(a+b)(b+c)(a-c)\d {\h\tau} \tb \tau =0,\\
\mathcal{A}_3 &\doteq (a-b)(b-d)(a-d) \tau \d{\thb \tau} +(a+c)(c+d)(a-d)\hb \tau \d{\t \tau}  \\
   & ~~~~ - (b+c)(c+d)(b-d)\tb\tau  \d{\h \tau}-(a-b)(b+c)(c+a)\d{\b \tau} \th \tau =0,\\
\mathcal{A}_4 &\doteq (a-b)(b-c)(a-c) \tau \d{\thb \tau} +(a-c)(c+d)(d+a)\d{\h \tau} \tb \tau  \\
   & ~~~~ -(a-b)(b+d)(d+a) \d{\b \tau}\th \tau - (b-c)(c+d)(d+b) \d{\t\tau}\hb \tau=0.
\end{align*}
Together with \eqref{4dAKP-A0}, on can find
\begin{align*}
& 4 \times \mathrm{4DdBKP}(n,m,l,h)\\
=~& (a-b)(a-c)(a-d)\mathcal{A}_1 +(a-b)(b-c)(b-d)\mathcal{A}_2  + (a-c)(b-c)(c-d)\mathcal{A}_3\\
& +(a-d)(b-d)(c-d) \mathcal{A}_4 +[(a^2-b^2+c^2-d^2)^2-4(a+b)(b+c)(a+d)(c+d)]\mathcal{A}_0.
\end{align*}

Then, for solutions of the 4D dBKP equation \eqref{BKP-4D}, we have the following.
\begin{theorem}\label{T:2}
Once we have a dAKP $\tau$ function, we consistently extend its plane wave factor  to the 4th dimension
and impose  symmetries \eqref{sym} and \eqref{sym-d}.
Then the $\tau$ function is a solution to the 4D dBKP equation \eqref{BKP-4D}.
\end{theorem}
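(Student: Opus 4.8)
The plan is to follow the route of the proof of Theorem~\ref{T:1}, replacing each three-dimensional ingredient by its four-dimensional analogue; all of these analogues have already been assembled in the paragraphs preceding the statement. Concretely, I would show that under the hypotheses the extended $\tau$ annihilates simultaneously the five bilinear quantities $\mathcal{A}_0,\mathcal{A}_1,\mathcal{A}_2,\mathcal{A}_3,\mathcal{A}_4$, and then quote the displayed algebraic identity expressing $4\cdot\mathrm{4DdBKP}(n,m,l,h)$ as the linear combination $(a-b)(a-c)(a-d)\mathcal{A}_1+(a-b)(b-c)(b-d)\mathcal{A}_2+(a-c)(b-c)(c-d)\mathcal{A}_3+(a-d)(b-d)(c-d)\mathcal{A}_4+\bigl[(a^2-b^2+c^2-d^2)^2-4(a+b)(b+c)(a+d)(c+d)\bigr]\mathcal{A}_0$. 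Once the five summands vanish, the right-hand side is zero, hence $\mathrm{4DdBKP}(n,m,l,h)=0$, which is precisely \eqref{BKP-4D}.

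For $\mathcal{A}_0=0$ I would use that the extended $\tau$ arises by consistently prolonging the dAKP plane wave factor \eqref{PWF-AKP} to the fourth lattice direction; by the 4D consistency of \eqref{AKP} recalled in Section~\ref{sec-2-2}, such a $\tau$ solves not only the 4D dAKP equation \eqref{4dAKP} but also its companion \eqref{4dAKP-A0}, i.e. $\mathcal{A}_0=0$.

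For $\mathcal{A}_1=\mathcal{A}_2=\mathcal{A}_3=\mathcal{A}_4=0$ I would run, for each of the four reflection symmetries \eqref{sym-a}, \eqref{sym-b}, \eqref{sym-c}, \eqref{sym-d}, the same reflect-and-shift argument that produced \eqref{ssym} from \eqref{AKP}. For instance, applying the substitution $n\mapsto-n$, $a\mapsto-a$ to \eqref{4dAKP} leaves $\tau$ invariant by \eqref{sym-a}, inverts the tilde shift, and flips the sign of every parameter factor containing $a$, while the hat, bar and dot structure is untouched; a single up-shift in the tilde direction then brings the result into the stated form $\mathcal{A}_1=0$. The symmetries \eqref{sym-b}, \eqref{sym-c}, \eqref{sym-d} likewise yield $\mathcal{A}_2=0$, $\mathcal{A}_3=0$, $\mathcal{A}_4=0$; the only extra bookkeeping is the dot shift, which is inert under the first three symmetries and merely inverted under \eqref{sym-d}.

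The one genuinely computational point in the whole argument is the verification of the identity for $4\cdot\mathrm{4DdBKP}$ itself: one expands both sides as polynomials in $a,b,c,d$ with the distinct shifted-$\tau$ products that occur (there are only four of them: $\tau\,\d{\thb\tau}$, $\th\tau\,\d{\b\tau}$, $\tb\tau\,\d{\h\tau}$, $\hb\tau\,\d{\t\tau}$) treated as formal unknowns, and matches coefficients. This is the exact analogue of the identity \eqref{BKPA} used in Theorem~\ref{T:1} and, being stated in the excerpt, may be taken as given; granting it, the remaining steps are immediate, and the only place demanding care is the shift bookkeeping of the third paragraph.
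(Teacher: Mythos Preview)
Your proposal is correct and follows essentially the same route as the paper: the discussion immediately preceding the theorem statement already records (i) that the 4D consistent extension of the dAKP $\tau$ solves both \eqref{4dAKP} and \eqref{4dAKP-A0}, hence $\mathcal{A}_0=0$; (ii) that each reflection symmetry turns \eqref{4dAKP} into the corresponding $\mathcal{A}_i=0$; and (iii) the linear identity expressing $4\cdot\mathrm{4DdBKP}$ in terms of $\mathcal{A}_0,\dots,\mathcal{A}_4$. Your reflect-and-shift bookkeeping for producing $\mathcal{A}_1$ from \eqref{4dAKP} via \eqref{sym-a} is exactly the computation the paper leaves implicit.
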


\section{$\tau$ functions with   symmetries}\label{sec-3}

We construct $\tau$ function that possesses symmetries \eqref{sym} or \eqref{sym'}.
Obviously, the  $\tau$ function in Hirota's form \eqref{tau} with \eqref{PWF-AKP}$|_{q_i=-p_i}$,
i.e.
\begin{equation}
\eta_i=\biggl(\frac{a+p_i}{a-p_i}\biggr)^{n}\biggl(\frac{b+p_i}{b-p_i}\biggr)^{m}\biggl(\frac{c+p_i}{c-p_i}\biggr)^{l},
~~A_{ij}=\frac{(p_i-p_j)^2}{(p_i+p_j)^2},
\label{PWF-AKP-sym}
\end{equation}
agrees with the symmetries \eqref{sym}.
Such a  $\tau$ function provides soliton solutions for the dAKP equation \eqref{AKP}
as well as for the dBKP equation \eqref{BKP} in light of Theorem \ref{T:1}.
It is remarkable that \eqref{PWF-AKP-sym} cannot be obtained from
the dBKP plane wave factor and phase factor \eqref{PWF-BKP} by
imposing constraints on $(p_j, q_j)$.
In the following we will go through more forms of  $\tau$ functions with the desired symmetries.

\subsection{Gramian form via Cauchy matrix approach}\label{sec-3-1}

In order to derive a more general  $\tau$ function, we construct it by means of Cauchy
matrix approach (cf.\cite{NijAH-JPA-2009,ZhaZ-SAPM-2013,HieJN-book-2016}).
Consider the  Sylvester equation
\begin{equation}\label{syl-Eq}
    \boldsymbol{KM}+\bM \bK=\br \bc^{\mathrm{T}},
\end{equation}
where $\bK$ is a given $N\times N$ constant matrix and $\bK$ and $-\bK$ do not share eigenvalues,
$\bM\in \mathbb{C}_{N\times N}[n,m,l,\cdots]$,
$\br=(r_1,r_2,\cdots, r_N)^{\mathrm{T}}$ and $r_i$ are functions $r_i: \mathbb{Z}^{\infty}\mapsto \mathbb{C}$,
and $\bc=(c_1, c_2,\cdots, c_N)^{\mathrm{T}}$ with $c_i \in \mathbb{C}$.
Dispersion relations are introduced through
\begin{subequations}\label{DR}
\begin{align}
   & \t{\boldsymbol r}=(a-\boldsymbol K)^{-1}(a+\boldsymbol K)\boldsymbol r,  \\
   & \h{\boldsymbol r}=(b-\boldsymbol K)^{-1}(b+\boldsymbol K)\boldsymbol r,    \\
   & \b{\boldsymbol r}=(c-\boldsymbol K)^{-1}(c+\boldsymbol K)\boldsymbol r.
\end{align}
\end{subequations}
Since $\bK$ and $-\bK$ do not have common eigenvalues, $\bM$ is uniquely determined by
\eqref{syl-Eq} with give $(\bK, \br, \bc)$ \cite{Sylvester}.
Here for the spacing parameters $a,b,c$ we skip the unit matrix $\bI$ without any confusion in notations.
Then, it can be proved that $\bM$ obeys the following shift evolutions \cite{ZhaZ-SAPM-2013}
\bsb
    \begin{align}
        \bs{\widetilde M}(a+\bK)-(a+\bK)\bM=\t{\br} \bc^{\mathrm{T}}, \\
        (a-\bK)\t{\bM}-\bM (a-\bK)=\br \bc^{\mathrm{T}},
    \end{align}
\esb
and the parallel relations for $(m, b)$ and $(l,c)$.
Then we introduce scalar functions
\begin{subequations}\label{S}
\begin{align}
S^{(i,j)}&=\bc^{\mathrm{T}} \bK^j(\bI+\bM)^{-1}\bK^i \br, \label{S-ij}    \\
S(\alpha,j]&=\bc^{\mathrm{T}} \bK^j (\bI+\bM)^{-1}(\alpha+\bK)^{-1}\br,    \\
S[i,\beta)&=\bc^{\mathrm{T}}(\beta+\bK)^{-1}(\bI+\bM)^{-1}\bK^i \br,
\end{align}
\end{subequations}
with $i,j\in \mathbb{Z}$,  $\alpha, \beta\in \mathbb{C}$, and define $\tau$ function as
\begin{equation}
\tau =\bs{|I+M|}.
\label{tau-G}
\end{equation}
One can derive evolutions of these functions.
$S=(S^{(i,j)})_{\infty\times \infty}$ is a symmetric matrix \cite{ZhaZ-SAPM-2013},
i.e. $S^{(i,j)}=S^{(j,i)}$.
Evolutions for  $S^{(i,j)}$ can be described as (cf.\cite{NijAH-JPA-2009,ZhaZ-SAPM-2013})
\bsb\label{S-ij-evol}
    \begin{align}
        a\widetilde{S}^{(i,j)}-\widetilde{S}^{(i,j+1)}&=aS^{(i,j)}+S^{(i+1,j)}-S^{(0,j)}\widetilde{S}^{(i,0)},   \\
        a{S}^{(i,j)}+{S}^{(i,j+1)}&=a\widetilde S^{(i,j)}+\widetilde S^{(i+1,j)}-\widetilde S^{(0,j)}{S}^{(i,0)},
\end{align}
\esb
and parallel relations for $(m,b)$ and $(l,c)$.
One more symmetric relation is $S(\alpha,j]=S[j,\alpha)$ due to $S^{(i,j)}=S^{(j,i)}$ and  the expansion
\[        S(\alpha,j]=\sum_{i=0}^{\infty}\frac{(-1)^i}{a^{i+1}}S^{(i,j)},  ~~
        S[j,\alpha)=\sum_{i=0}^{\infty}\frac{(-1)^i}{a^{i+1}}S^{(j,i)}.
\]
With the above expansion and \eqref{S-ij-evol} one can obtain  evolutions of $S(\alpha,j]$ as
    \bsb\label{saj-move}
    \begin{align}
        \label{saj-move-1}a\widetilde{S}(\alpha,j]-\widetilde{S}(\alpha,j+1]
        &=(a-\alpha)S(\alpha,j]+S^{(0,j)}(1-\widetilde S(\alpha,0]), \\
        \label{saj-move-2}b\widehat{S}(\alpha,j]-\widehat{S}(\alpha,j+1]
        &=(b-\alpha)S(\alpha,j]+S^{(0,j)}(1-\widehat S(\alpha,0]), \\
        \label{saj-move-3}c\overline{S}(\alpha,j]-\overline{S}(\alpha,j+1]
        &=(c-\alpha)S(\alpha,j]+S^{(0,j)}(1-\overline S(\alpha,0]).
    \end{align}
    \esb

Next, let us derive some relations on $\tau$ and the functions in \eqref{S} (cf.\cite{HieJN-book-2016}).

\begin{lemma}\label{L:1}
    $\tau$ function \eqref{tau-G} obeys  evolutions
    \bsb
    \begin{align}
        \label{24a}\t\tau/\tau&=1-S[0,-a),   \\
        \label{24b}\tau/\t\tau&=1-\widetilde S[0,a).
    \end{align}
    \esb
    and the parallel relations for $(m,b)$ and $(l,c)$.
\end{lemma}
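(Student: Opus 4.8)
The plan is to prove the two evolution relations in Lemma~\ref{L:1} directly from the definition $\tau=\bs{|I+M|}$ together with the shift relation for $\bM$ in the tilde direction, and then to invoke the symmetry of the construction to get the $(m,b)$ and $(l,c)$ versions for free. The starting point is the identity
\begin{equation*}
\frac{\t\tau}{\tau}=\frac{|\bI+\t\bM|}{|\bI+\bM|}=\bigl|(\bI+\bM)^{-1}(\bI+\t\bM)\bigr|,
\end{equation*}
so everything reduces to computing $\det\!\bigl[(\bI+\bM)^{-1}(\bI+\t\bM)\bigr]$. The idea is to use the shift relation in the form that isolates $\t\bM$ as a rank-one perturbation of $\bM$ (up to conjugation by $(a-\bK)$ or $(a+\bK)$), so that the determinant collapses by the standard rank-one formula $\det(\bI+\bu\bv^{\mathrm T})=1+\bv^{\mathrm T}\bu$.

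Concretely, first I would take the relation $(a-\bK)\t{\bM}-\bM (a-\bK)=\br \bc^{\mathrm{T}}$ and rewrite it as $\t\bM=(a-\bK)^{-1}\bigl[\bM(a-\bK)+\br\bc^{\mathrm T}\bigr]=(a-\bK)^{-1}\bM(a-\bK)+(a-\bK)^{-1}\br\bc^{\mathrm T}$. Adding $\bI=(a-\bK)^{-1}(a-\bK)$ gives $\bI+\t\bM=(a-\bK)^{-1}(\bI+\bM)(a-\bK)+(a-\bK)^{-1}\br\bc^{\mathrm T}$, hence
\begin{equation*}
(\bI+\bM)^{-1}(\bI+\t\bM)=(\bI+\bM)^{-1}(a-\bK)^{-1}(\bI+\bM)(a-\bK)+(\bI+\bM)^{-1}(a-\bK)^{-1}\br\bc^{\mathrm T}.
\end{equation*}
Taking determinants, the first term alone has determinant $1$ (it is a conjugate of the identity times $(a-\bK)$ and its inverse, whose determinants cancel); more carefully, I would factor out $(\bI+\bM)^{-1}(a-\bK)^{-1}(\bI+\bM)(a-\bK)$ on the left, whose determinant is $1$, leaving $\det\bigl[\bI+(a-\bK)^{-1}(\bI+\bM)^{-1}(a-\bK)\,(\bI+\bM)^{-1}(a-\bK)^{-1}\br\,\bc^{\mathrm T}\bigr]$, which by the rank-one determinant lemma equals $1+\bc^{\mathrm T}(\bI+\bM)^{-1}(a-\bK)^{-1}(\bI+\bM)(a-\bK)(\bI+\bM)^{-1}(a-\bK)^{-1}\br$. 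The remaining task is to simplify the scalar so that it matches $1-\t S[0,a)$, i.e. $1-\bc^{\mathrm T}(a+\bK)^{-1}(\bI+\t\bM)^{-1}\br$; here I would again use the $\bM$-shift relation to convert $(\bI+\t\bM)^{-1}$ back to expressions in $\bM$ and $(a\pm\bK)$, and repeatedly use the Sylvester equation $\bK\bM+\bM\bK=\br\bc^{\mathrm T}$ to absorb the $\br\bc^{\mathrm T}$ terms. For \eqref{24a} I would instead start from the other shift relation $\bs{\widetilde M}(a+\bK)-(a+\bK)\bM=\t{\br}\bc^{\mathrm T}$, or equivalently run the same computation with the roles of $\tau$ and $\t\tau$ interchanged and $a\mapsto$ the reflected shift, producing $1-S[0,-a)$.

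I expect the main obstacle to be the scalar bookkeeping: after the rank-one reduction one is left with a product of several factors $(a\pm\bK)^{\pm1}$ and $(\bI+\bM)^{\pm1}$ sandwiched around $\br$ and $\bc^{\mathrm T}$, and one must massage it into exactly one of the defined symbols $S[0,\pm a)$. The trick will be to commute $(\bI+\bM)^{-1}$ past $(a-\bK)$ using $(a-\bK)(\bI+\bM)^{-1}=(\bI+\t\bM)^{-1}(a-\bK)+(\text{rank-one})$, derived once more from the $\bM$-shift relation, which lets the middle $(\bI+\bM)$ and one $(\bI+\bM)^{-1}$ annihilate and leaves a clean $(a+\bK)^{-1}(\bI+\t\bM)^{-1}\br$ inside; the leftover rank-one pieces should cancel against each other or vanish when hit by $\bc^{\mathrm T}$ on the left because of the Sylvester relation. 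Once the tilde case is done, the $(m,b)$ and $(l,c)$ cases are identical with $a$ replaced by $b$ and $c$ respectively, since the dispersion relations \eqref{DR} and the $\bM$-shift relations have the same structure in every direction, so no separate argument is needed — this is exactly the "parallel relations" remark.
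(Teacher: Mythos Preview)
Your approach is the paper's: express $\bI+\t\bM$ as a rank-one perturbation of (a conjugate of) $\bI+\bM$ via the shift relation, then apply the Weinstein--Aronszajn identity. But your execution contains slips that make the computation look much harder than it is.

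First, you have mislabelled the target. Starting from $(a-\bK)\t\bM-\bM(a-\bK)=\br\bc^{\mathrm T}$ and computing $\t\tau/\tau$ yields \eqref{24a}, not \eqref{24b}. Second, your expression for the rank-one piece after factoring is wrong: with $P=(\bI+\bM)^{-1}(a-\bK)^{-1}(\bI+\bM)(a-\bK)$ and $Q=(\bI+\bM)^{-1}(a-\bK)^{-1}\br\bc^{\mathrm T}$ one has $P^{-1}=(a-\bK)^{-1}(\bI+\bM)^{-1}(a-\bK)(\bI+\bM)$, and in $P^{-1}Q$ the middle factors $(\bI+\bM)(\bI+\bM)^{-1}$ and $(a-\bK)(a-\bK)^{-1}$ cancel completely, leaving simply $P^{-1}Q=(a-\bK)^{-1}(\bI+\bM)^{-1}\br\,\bc^{\mathrm T}$. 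The rank-one lemma then gives immediately
\[
\t\tau/\tau=1+\bc^{\mathrm T}(a-\bK)^{-1}(\bI+\bM)^{-1}\br
=1-\bc^{\mathrm T}(-a+\bK)^{-1}(\bI+\bM)^{-1}\br=1-S[0,-a),
\]
which is exactly \eqref{24a}. None of the anticipated commuting of $(\bI+\bM)^{-1}$ past $(a-\bK)$, nor any use of the Sylvester equation, is needed. (Also note $\t S[0,a)=\bc^{\mathrm T}(a+\bK)^{-1}(\bI+\t\bM)^{-1}\t\br$, with $\t\br$, not $\br$.)

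The paper sidesteps computing $P^{-1}$ altogether by multiplying through by $|a-\bK|$ at the outset: from $(a-\bK)(\bI+\t\bM)=(\bI+\bM)(a-\bK)+\br\bc^{\mathrm T}$ one factors $(a-\bK)$ on the right and $(\bI+\bM)$ on the left to read off $\t\tau=\tau\cdot\bigl|\bI+(\bI+\bM)^{-1}\br\bc^{\mathrm T}(a-\bK)^{-1}\bigr|$ directly. For \eqref{24b} it uses the other shift relation $\t\bM(a+\bK)-(a+\bK)\bM=\t\br\bc^{\mathrm T}$ in the same way, multiplying by $|a+\bK|$, to obtain $\tau/\t\tau=1-\bc^{\mathrm T}(a+\bK)^{-1}(\bI+\t\bM)^{-1}\t\br=1-\t S[0,a)$.
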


\begin{proof}
    First,
\[|a-\bK|\t\tau =|a-\bK||\bI+\widetilde\bM| =|a-\bK+\bM(a-\bK)+\br \bc^{\mathrm{T}} |,\]
i.e.,
\[ \t\tau =|\bI+\bM+\br\bc^{\mathrm{T}}(a-\bK)^{-1}|
 =|\bI+\bM||\bI+\br\bc^{\mathrm{T}}(a-\bK)^{-1}(\bI+\bM)^{-1}|.   \]
Then, by means of Weinstein-Aronszajn identity $|\bI+\bs r\bs c^{\mathrm{T}}|=1+\bs c^{\mathrm{T}}\bs r$,
one has
\begin{align*}
 \t\tau/\tau =1-\bc^{\mathrm{T}}(\bK-a)^{-1}(\bI+\bM)^{-1}\br    =1-S[0,-a).
\end{align*}

\eqref{24b} can be proved as the following.
\[|a+\bK|\tau =|a+\bK||\bI+\bM| =|a+\bK+\widetilde\bM(a+\bK)-\t\br \bc^{\mathrm{T}} |, \]
and
\[\tau =|\bI+\widetilde\bM-\t\br\bc^{\mathrm{T}}(a+\bK)^{-1}|
=|\bI+\widetilde\bM||\bI-\t\br\bc^{\mathrm{T}}(a+\bK)^{-1}(\bI+\widetilde\bM)^{-1}|.  \]
It then follows from  the  Weinstein-Aronszajn identity that one gets \eqref{24b}.
\end{proof}

\begin{lemma}\label{L:2}
    Setting $u=S^{(0,0)}$, we  have the following relations
    \bsb\label{u-t}
    \begin{align}
        \label{u-t1}(a-b+\h u-\t u)   &=(a-b)\frac{\th \tau \tau}{\t \tau\h \tau},  \\
        \label{u-t2}(b-c+\b u-\h  u)     &=(b-c)\frac{\h{\b \tau} \tau}{\h \tau\b\tau},        \\
        \label{u-t3}(c-a+\t  u-\b  u)   &=(c-a)\frac{\t{\b \tau} \tau}{\b \tau\t \tau}.
    \end{align}
    \esb
\end{lemma}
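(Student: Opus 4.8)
Because the three identities \eqref{u-t1}, \eqref{u-t2}, \eqref{u-t3} are carried into one another by the cyclic relabelling that sends $a,b,c$ to $b,c,a$ and the tilde-, hat-, bar-shifts to the hat-, bar-, tilde-shifts (a symmetry of the whole Cauchy-matrix construction \eqref{syl-Eq}--\eqref{saj-move}), it suffices to establish the first one, \eqref{u-t1}. The plan is, first, to clear denominators: \eqref{u-t1} is equivalent to the bilinear relation $(a-b+\h u-\t u)\,\t\tau\,\h\tau=(a-b)\,\th\tau\,\tau$. One then removes all $\tau$-factors by Lemma \ref{L:1}: using \eqref{24a} in the tilde- and hat-directions, and once more at the tilde-shifted point, $\t\tau=(1-S[0,-a))\tau$, $\h\tau=(1-S[0,-b))\tau$ and $\th\tau=(1-\t S[0,-b))\,\t\tau$. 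Substituting these and cancelling the common factor $\tau^{2}(1-S[0,-a))$, the claim turns into a relation at the level of the scalar functions \eqref{S} alone:
\[
(a-b+\h u-\t u)\,(1-S[0,-b))=(a-b)\,(1-\t S[0,-b)),\qquad u=S^{(0,0)} .
\]

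It then remains to deduce this from the shift evolutions. The ingredients I would use are: (i) the tilde-evolution of $S[0,-b)$, coming from \eqref{saj-move-1} at $\alpha=-b$ (after elimination of the auxiliary function $\t S(-b,1]$ via the companion relations of \eqref{S-ij-evol}), which expresses $\t S[0,-b)$ linearly in $S[0,-b)$, $u$ and $\t u$; and (ii) closed forms for the quotients $\t\tau/\tau$ and $\h\tau/\tau$ --- equivalently for $S[0,-a)$ and $S[0,-b)$ --- in terms of $\{u,\t u\}$ respectively $\{u,\h u\}$ only, obtained by evaluating \eqref{saj-move-1} and \eqref{saj-move-2} at the resonant value of the free parameter (where the $S(\alpha,j]$-term drops out) and combining with \eqref{24a}. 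Feeding (i) and (ii) into the displayed $S$-level identity, every occurrence of an $S$-function cancels and what remains is a polynomial identity in $u,\t u,\h u,a,b$ --- precisely the potential form of the dAKP equation \eqref{AKP} restricted to the $(n,m)$-sublattice --- which in turn follows from the $S^{(i,j)}$-evolutions \eqref{S-ij-evol}. (As a consistency check, summing \eqref{u-t1}--\eqref{u-t3} telescopes the left-hand sides to $0$, and after multiplication by $\t\tau\h\tau\b\tau/\tau$ the right-hand sides reproduce \eqref{AKP}.)

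The step I expect to be the real obstacle is the bookkeeping just described. Since $\bK$ does not commute with $\bM$, hence not with $(\bI+\bM)^{-1}$, the functions $S[0,-a)$, $S[0,-b)$ and the $S^{(i,j)}$ are genuinely independent, so the displayed identity cannot be checked by treating them as free symbols; and the superficially natural route of expanding $\th\tau=|\bI+\th\bM|$ as a rank-two modification of a conjugate of $\bI+\bM$ is useless, because it collapses to the tautology $\th\tau/\tau=(\th\tau/\t\tau)(\t\tau/\tau)$ and the potential $u$ never surfaces. Identifying the particular combination of shift relations that does make $u=S^{(0,0)}$ appear, and then discharging the resulting potential-dAKP identity, is where the substance of the argument lies. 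A more computational alternative would be to derive the bilinear form $(a-b+\h u-\t u)\,\t\tau\,\h\tau=(a-b)\,\th\tau\,\tau$ directly from a Jacobi/Pl\"ucker-type determinant identity for $|\bI+\bM|$ and its three shifts.
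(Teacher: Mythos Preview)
Your first paragraph is sound: the three relations are cyclic copies of one another, and rewriting \eqref{u-t1} via Lemma~\ref{L:1} as
\[
(a-b+\h u-\t u)\,(1-S[0,-b))=(a-b)\,(1-\t S[0,-b))
\]
is a legitimate and correct reduction (you used \eqref{24a}; the paper uses the companion \eqref{24b} and lands at the equivalent relation $(a-b+\h u-\t u)(1-\th S(a,0])=(a-b)(1-\t S(a,0])$). So the target $S$-level identity is right.

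The gap is in your second paragraph. Step~(ii) cannot work: there is no closed expression for $\t\tau/\tau=1-S[0,-a)$ in terms of $u,\t u$ alone. The quotient $\t\tau/\tau$ is an $N\times N$ determinant ratio while $u=S^{(0,0)}$ is a single scalar; for general $\bK$ the former is not a function of the latter and its shift. Step~(i) is closer but also misfires: combining the $\widetilde{\ }$-evolution \eqref{saj-move-1} at $\alpha=-b$ with the ``companion'' relation coming from \eqref{S-ij-evol}(b) (summed to $S(\alpha,j]$ form) does eliminate the $j{=}1$ quantity, but what appears is $\h u$, not $\t u$. Concretely, the dual relation reads $(a+\alpha)\t S(\alpha,0]=aS(\alpha,0]+S(\alpha,1]+\t u(1-S(\alpha,0])$ and its $\widehat{\ }$-analogue at the \emph{resonant} value $\alpha=-b$ gives $0=bS(-b,0]+S(-b,1]+\h u(1-S(-b,0])$; subtracting these two yields precisely your displayed identity in one line. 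So the mechanism you were looking for exists, but it runs through $\h u$ (from the $\widehat{\ }$-evolution at resonance $\alpha=-b$), not through a second $\widetilde{\ }$-relation, and it does not pass through any ``potential-dAKP'' step.

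For comparison, the paper's argument is the mirror image and slightly shorter: take $\alpha=a$ (resonant for the $\widetilde{\ }$-direction) in \eqref{saj-move-1} and \eqref{saj-move-2} at $j=0$, shift the first by $\widehat{\ }$ and the second by $\widetilde{\ }$ so both carry $\th S(a,1]$, subtract, and then invoke \eqref{24b}. That two-line elimination is exactly the ``particular combination of shift relations'' you were seeking; no closed form for $\t\tau/\tau$ in terms of $u,\t u$, and no separate dAKP-type identity, is needed.
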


\begin{proof}
The proof has been give in \cite{NijAH-JPA-2009} for the case $\bK$ is diagonal.
Here let us extract out main steps and extend them to the case of arbitrary $\bK$.

Considering the evolutions of $S(\alpha,j]$ given in \eqref{saj-move-1} and \eqref{saj-move-2}
where we take  $\alpha=a,j=0$, we get
    \bsb\label{26}
    \begin{align}
        \label{26a}a\widetilde S(a,0]-\widetilde S(a,1]&=u(1-\widetilde S(a,0]), \\
        \label{26b}b\widehat S(a,0]-\widehat S(a,1]&=(b-a)S(a,0]+u(1-\th S(a,0]).
    \end{align}
    \esb
Eliminating $S(a,1]$ from the above gives rise to
    \begin{align}
        (a-b+\h u-\t u)(1-\th S(a,0])-(a-b)(1-\t S(a,0])=0.
    \end{align}
which indicates the relation  \eqref{u-t1} using \eqref{24b}.
\eqref{u-t2} and \eqref{u-t3} can be obtained in a similar way.
\end{proof}

Thus, combining the three equations in \eqref{u-t} together, we get the dAKP equation.

\begin{theorem}\label{T:2}
The  $\tau$ function defined in \eqref{tau-G} satisfies the dAKP equation \eqref{AKP}.
\end{theorem}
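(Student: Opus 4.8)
The plan is to combine the three scalar evolution relations in \eqref{u-t} of Lemma~\ref{L:2} into the single bilinear identity \eqref{AKP}. Recall that $u=S^{(0,0)}$ is a single scalar function of $(n,m,l)$, and the three equations relate shifted differences of $u$ to ratios of $\tau$-shifts. The key algebraic observation is that the left-hand sides of \eqref{u-t1}, \eqref{u-t2}, \eqref{u-t3} sum to zero identically: writing them out, $(a-b+\h u-\t u)+(b-c+\b u-\h u)+(c-a+\t u-\b u)=0$, since all the constant terms cancel and each shifted $u$ appears once with a plus sign and once with a minus sign. This is the ``telescoping'' that makes the whole argument work.

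First I would rewrite each relation in \eqref{u-t} so that the common denominator is cleared. From \eqref{u-t1}, $(a-b+\h u-\t u)\,\t\tau\,\h\tau=(a-b)\,\th\tau\,\tau$; similarly $(b-c+\b u-\h u)\,\h\tau\,\b\tau=(b-c)\,\hb\tau\,\tau$ and $(c-a+\t u-\b u)\,\b\tau\,\t\tau=(c-a)\,\tb\tau\,\tau$. The natural next step is to multiply each of these by a suitable factor so that the ``$u$-containing'' left sides can be added and made to telescope: the obstacle is that the three equations carry different $\tau$-bilinear factors ($\t\tau\h\tau$ versus $\h\tau\b\tau$ versus $\b\tau\t\tau$), so one cannot simply add them. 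The standard device is instead to divide \eqref{u-t1} by $\t\tau\,\h\tau$, \eqref{u-t2} by $\h\tau\,\b\tau$, \eqref{u-t3} by $\b\tau\,\t\tau$ — but that reintroduces denominators. The cleaner route, which I would follow, is to divide all three cleared relations by the product $\t\tau\,\h\tau\,\b\tau$: this gives
\begin{align*}
(a-b+\h u-\t u)/\b\tau&=(a-b)\,\th\tau\,\tau/(\t\tau\,\h\tau\,\b\tau),\\
(b-c+\b u-\h u)/\t\tau&=(b-c)\,\hb\tau\,\tau/(\t\tau\,\h\tau\,\b\tau),\\
(c-a+\t u-\b u)/\h\tau&=(c-a)\,\tb\tau\,\tau/(\t\tau\,\h\tau\,\b\tau).
\end{align*}
Still the left sides don't telescope cleanly because of the different single-$\tau$ denominators. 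So the genuinely correct move is the one used throughout the Cauchy-matrix literature: multiply \eqref{u-t1} (cleared form) by $\b\tau$, \eqref{u-t2} by $\t\tau$, \eqref{u-t3} by $\h\tau$ — wait, that changes the $u$-coefficients inconsistently too.

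Let me state the actual mechanism I would use. Multiply the cleared \eqref{u-t1} by nothing, the cleared \eqref{u-t2} by nothing, the cleared \eqref{u-t3} by nothing, but first observe each already has the form $(\text{linear in }u)\cdot(\text{2-factor }\tau\text{-product})=(\text{const})\cdot(\th\tau\,\tau\text{-type})$. Now add all three cleared equations after multiplying the $i$-th by the ``missing'' $\tau$-shift so that every term acquires the common factor $\t\tau\,\h\tau\,\b\tau$ on the $u$-side: multiply the cleared \eqref{u-t1} by $\b\tau$, the cleared \eqref{u-t2} by $\t\tau$, the cleared \eqref{u-t3} by $\h\tau$. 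The left sides become $(a-b+\h u-\t u)\,\t\tau\,\h\tau\,\b\tau$, etc., so their sum is $\bigl[(a-b+\h u-\t u)+(b-c+\b u-\h u)+(c-a+\t u-\b u)\bigr]\t\tau\,\h\tau\,\b\tau=0$ by the telescoping identity above. The right sides become $(a-b)\,\th\tau\,\tau\,\b\tau+(b-c)\,\hb\tau\,\tau\,\t\tau+(c-a)\,\tb\tau\,\tau\,\h\tau$. Hence the sum of right sides vanishes; dividing by $\tau$ (nonzero, since $\tau=|\bI+\bM|$ and we work on the open set where this is nonzero, or treat it as a polynomial identity) yields exactly $(a-b)\th\tau\,\b\tau+(b-c)\hb\tau\,\t\tau+(c-a)\tb\tau\,\h\tau=0$, which is \eqref{AKP}.

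The main obstacle, and the only nontrivial point, is precisely the bookkeeping in the preceding paragraph: one must multiply each relation in \eqref{u-t} by the correct single extra $\tau$-shift (the one absent from its own denominator) so that after clearing denominators all three $u$-bearing sides share the common factor $\t\tau\,\h\tau\,\b\tau$ and the constants-plus-shifted-$u$ expressions telescope to zero; any other choice of multipliers spoils either the telescoping or the common factor. Once that pairing is set up correctly, the conclusion is immediate and the division by $\tau$ is harmless. I would present it compactly: clear denominators in \eqref{u-t}, multiply the three resulting identities by $\b\tau$, $\t\tau$, $\h\tau$ respectively, sum, invoke the telescoping cancellation on the left, and divide by $\tau$ on the right to read off \eqref{AKP}.
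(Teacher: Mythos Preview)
Your final argument is correct and is precisely what the paper does: it states only ``combining the three equations in \eqref{u-t} together, we get the dAKP equation,'' and your multiply-by-the-missing-shift, sum, telescope, divide-by-$\tau$ computation is the explicit version of that sentence. The exposition wanders before settling on the right multipliers $\b\tau,\t\tau,\h\tau$, so in a clean write-up keep only the last paragraph.
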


Solution $\bM$ to the Sylvester equation \eqref{syl-Eq} and $\br$ to the dispersion relation \eqref{DR}
can be written out in terms of the canonical forms of $\bK$.
For a given $\bK$, the dispersion relation \eqref{DR} indicates the symmetry for $\br$:  $\br((n_i,a_i))=\br((-n_i,-a_i))$,
so is for $\bM$, i.e.  $\bM((n_i,a_i))=\bM((-n_i,-a_i))$,
and so is for $\tau$, i.e.  $\tau((n_i,a_i))=\tau((-n_i,-a_i))$.

Canonical form of $\bK$ is composed of a diagonal matrix and different Jordan blocks.
When $\bK$ is a diagonal matrix $\bK=\mathrm{diag}\{p_1, p_2, \cdots, p_N\}$,
$\br$ consists of $r_i=\eta_i$ where $\eta_i$ is given in \eqref{PWF-AKP-sym},
and $\bM=(m_{ij})_{N\times N}$ consists of
\begin{equation}
m_{ij}=\frac{r_ic_j}{p_i+p_j}.
\end{equation}
Then, $\tau=|\bI+\bM|$ is a Gramian which is a special case of the solution obtained in \cite{OhtHTI-JPSJ-1993}.
When $\bK$ is a Jordan block and a more general form, one can refer to  Sec.4 of  \cite{ZhaZ-SAPM-2013}
for explicit forms of $\br$ and $\bM$.

\subsection{Casoratian solutions}\label{sec-3-2}

The deformed dAKP equation \eqref{ssym-b} also appeared as a member in the bilinear forms of H3 equation
in the Adler-Bobenko-Suris (ABS) list \cite{AdlBS-CMP-2003}, (see Eq.(5.20a) in \cite{HieZ-JPA-2009}).
It allows a Casoratian solution \cite{HieZ-JPA-2009}
\begin{subequations}\label{tau-C+psi}
\begin{equation}
\tau(\psi)=|\psi(n,m,l), \psi(n,m,l+1), \psi(n,m,l+2), \cdots, \psi(n,m,l+N-1)|,
\label{tau-C}
\end{equation}
where $\psi=(\psi_1, \psi_2, \cdots, \psi_N)^{\mathrm{T}}$ and
\begin{equation}
\psi_i=\gamma^{+}_i(a+p_i)^n(b+p_i)^m(c+p_i)^l+\gamma^{-}_i(a-p_i)^n(b-p_i)^m(c-p_i)^l,
~ \gamma^{\pm}_i\in \mathbb{C}.
\label{psi}
\end{equation}
\end{subequations}
At the first glance, $\tau(\psi)$ does not have symmetries \eqref{sym}.
However, by means  of the gauge property of Hirota's discrete bilinear equations,
$\tau(\psi)$ does satisfy the dAKP and dBKP equations simultaneously.

\begin{theorem}\label{T:3}
The $\tau$ function $\tau(\psi)$ defined in \eqref{tau-C+psi} is a solution of the dAKP equation \eqref{AKP}
as well as the dBKP equation \eqref{BKP}.
\end{theorem}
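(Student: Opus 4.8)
The plan is to remove a gauge factor so that the reflection machinery of Section \ref{sec-2-1} and the linear relations of Remark \ref{R:1} become available. The key external input is the known fact \cite{HieZ-JPA-2009} that the Casoratian $\tau(\psi)$ of \eqref{tau-C+psi} satisfies the deformed dAKP equation \eqref{ssym-b}, that is, $A_2=0$.

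The first step is a short symmetry computation. Substituting $(n,(m,b),l)\to(n,(-m,-b),l)$ in \eqref{psi} and using $(b\mp p_i)^{-m}=(b^2-p_i^2)^{-m}(b\pm p_i)^{m}$, one obtains $\psi_i(n,(-m,-b),l)=(-1)^{-m}(b^2-p_i^2)^{-m}\,\psi_i(n,(m,b),l)$. The prefactor does not involve $l$, hence is independent of the column index of the Casoratian \eqref{tau-C} (whose successive columns are shifts in $l$), so it can be pulled out of the $i$-th row of the determinant; therefore $\tau(\psi)$ enjoys the extended symmetry \eqref{sym'-b}. (The same computation in the $(n,a)$ direction gives \eqref{sym'-a}, which is not needed below.)

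The second step promotes $\tau(\psi)$ to a solution of the undeformed dAKP equation \eqref{AKP}. The derivation of \eqref{ssym-b} from \eqref{AKP} in Section \ref{sec-2-1} amounts to applying the involutive $(m,b)$-reflection followed by a single shift; reversing these two operations recovers \eqref{AKP} from $A_2=0$, provided $\tau(\psi)$ has symmetry \eqref{sym'-b}. The multiplicative gauge factor carried by \eqref{sym'-b} (of the type displayed in \eqref{sym'}) is harmless: as observed in Remark \ref{R:2}, such a factor cancels from every term of a discrete bilinear equation all of whose monomials carry the same total shift, which holds for \eqref{AKP} and for \eqref{ssym}. Hence $A=0$ for $\tau(\psi)$, which already establishes the first assertion.

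Finally, substituting $A_1=\frac{a+b}{a-c}\,A-\frac{b-c}{a-c}\,A_2$, which is the relation $A=\frac{a-c}{a+b}A_1+\frac{b-c}{a+b}A_2$ of Remark \ref{R:1}, into the identity $B=(a+b)(a+c)A-(a-b)(a-c)A_1$ from \eqref{BKPA-alt} gives $B=(a+b)(b+c)\,A+(a-b)(b-c)\,A_2$. Therefore $A=0$ and $A_2=0$ force $B=0$, so $\tau(\psi)$ satisfies the dBKP equation \eqref{BKP}, which completes the proof. The step requiring the most care is the symmetry computation — in particular recognising that the $m$-reflection produces a column-independent row prefactor in the Casoratian, so that the gauge property applies cleanly — together with tracking that prefactor through the reflected bilinear equation via Remark \ref{R:2}. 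A more computational alternative is to expand $\tau(\psi)$ by multilinearity of the Casoratian and evaluate the resulting Vandermonde-type determinants: this presents $\tau(\psi)$ as a gauge factor of the above type times the symmetric Hirota $\tau$ function \eqref{tau} with \eqref{PWF-AKP-sym}, after which Theorem \ref{T:1} applies directly (via Remark \ref{R:2}).
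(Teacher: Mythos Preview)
Your argument is correct, and it is a genuinely leaner variant of the paper's proof. Both start from the same external input (from \cite{HieZ-JPA-2009}) that $\tau(\psi)$ satisfies the reflected dAKP equation $A_2=0$. The paper then establishes \emph{all three} extended symmetries \eqref{sym'-a}--\eqref{sym'-c}; the $(l,c)$-symmetry requires the additional observation that the Casoratian \eqref{tau-C} can equivalently be written with shifts in $n$ (using $\t\psi-\b\psi=(a-c)\psi$), since otherwise the row prefactor would depend on the column index. Having all three symmetries, the paper then invokes Theorem~\ref{T:1} via Remark~\ref{R:2}. By contrast, you prove only the $(m,b)$-symmetry (which is immediate because the $l$-shifted columns are untouched), use it once to pass from $A_2=0$ to $A=0$, and then close with the purely algebraic identity $B=(a+b)(b+c)\,A+(a-b)(b-c)\,A_2$, derived by combining \eqref{BKPA-alt} with the relations in Remark~\ref{R:1}. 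Your route avoids the Casoratian-rewriting trick and needs only one of the three symmetries; the paper's route, on the other hand, records the full symmetry content of $\tau(\psi)$, which is exactly what is reused in the multiple-pole case (Theorem~\ref{T:4}). The computational alternative you mention at the end---expanding the Casoratian into a gauge factor times the Hirota form \eqref{tau} with \eqref{PWF-AKP-sym}---is also valid and gives a third, more explicit, route.
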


\begin{proof}
In addition to $\psi$, we introduce $N$-th order column vectors $\varphi, \omega$ and $\theta$ that are composed of,
respectively, (cf.\cite{HieZ-JPA-2009})
\begin{subequations}
\begin{align}
& \varphi_i=\gamma^{+}_i(a-p_i)^{-n}(b+p_i)^m(c+p_i)^l+\gamma^{-}_i(a+p_i)^{-n}(b-p_i)^m(c-p_i)^l,\label{phi}\\
& \omega_i=\gamma^{+}_i(a+p_i)^n(b-p_i)^{-m}(c+p_i)^l+\gamma^{-}_i(a-p_i)^n(b+p_i)^{-m}(c-p_i)^l,\label{omega}\\
& \theta_i=\gamma^{+}_i(a+p_i)^n(b+p_i)^m(c-p_i)^{-l}+\gamma^{-}_i(a-p_i)^n(b-p_i)^m(c+p_i)^{-l}.\label{theta}
\end{align}
\end{subequations}
One can prove that (cf.\cite{HieZ-JPA-2009})
\begin{align*}
\tau(\psi((n,a),(m,b),l))& =A^n\,\tau(\varphi((n,a),m,l))= (-1)^{N\times n} A^n\,  \tau(\psi((-n,-a),m,l))\\
                                     &  =B^m\,\tau(\omega(n,(m,b),l)= (-1)^{N\times m} B^m\,\tau(\psi(n,(-m,-b),l)),
\end{align*}
where $A=\prod^{N}_{i=1}(a^2-p_i^2)$ and $B=\prod^{N}_{i=1}(b^2-p_i^2)$.
This means that $\tau(\psi)$ satisfies the extended symmetries \eqref{sym'} in $n$ and $m$-direction.
Meanwhile, note that due to the relation $\t\psi-\b\psi=(a-c)\psi$,
the $\tau$ function \eqref{tau-C} can be equivalently constructed in terms of shifts of $n$, i.e.
\begin{equation}
\tau(\psi)=|\psi(n,m,l), \psi(n+1,m,l), \psi(n+2,m,l), \cdots, \psi(n+N-1,m,l)|,
\label{tau-C-n}
\end{equation}
(see Eq.(2.24) in \cite{HieZ-JPA-2009}).
With this notation,
\[\tau(\psi(n,m,(l,c)))   = C^l\,\tau(\theta(n,m,(l,c))=(-1)^{N\times l} C^l\, \tau(\psi(n,m,(-l,-c))),\]
where $C=\prod^{N}_{i=1}(c^2-p_i^2)$.
This gives    the extended symmetries \eqref{sym'-c}.
Thus, due to the gauge property of Hirota's discrete bilinear equations,
\eqref{tau-C+psi} allows symmetries \eqref{sym'}
and consequently provides  a solution for both the dAKP and dBKP equations.
\end{proof}

Multiple pole solutions of the deformed dAKP equation \eqref{ssym-b} is given by $\tau(\psi)$ in the Casoratian
form \eqref{tau-C}, but where
\begin{equation}
\psi_1=(\ref{psi})|_{i=1},~~
\psi_j=\frac{1}{(j-1)!}\partial^{j-1}_{p_1}\psi_1, ~(j=2, 3, \cdots).
\label{psi'}
\end{equation}
This can be found in Theorem 1 in \cite{ShiZ-SIGMA-2011}.
We claim that $\tau(\psi)$ with \eqref{psi'} satisfies the extended symmetries \eqref{sym'} and then
it solves the dAKP and the dBKP as well. To elaborate this, we introduce
lower triangular Toeplitz matrix (LTTM)
\begin{equation}
\bT=\left(\begin{array}{ccccc}
t_0 & 0 & \cdots & 0 & 0\\
t_1 & t_0 & \cdots & 0 & 0\\
\vdots & \vdots & \ddots & \vdots & \vdots\\
t_{N-2} & t_{N-3} & \cdots & t_0 & 0\\
t_{N-1} & t_{N-2} & \cdots  & t_1 &  t_0
\end{array}
\right)
\end{equation}
and note that such a matrix can be generated by some function $f(p)$ via taking
\begin{equation}
t_j=\frac{1}{j!}\partial_p^{j} f(p), ~~ j=0, 1,\cdots.
\label{tj}
\end{equation}
For convenience, by $\bT[f(p)]$ we denote a LTTM generated by the function $f(p)$ via  \eqref{tj}.
We also introduce new auxiliary vectors $\varphi, \omega$ and $\theta$ by
\begin{subequations}
\begin{align}
& \varphi_1=(\ref{phi})|_{i=1},~~ \varphi_j=\frac{1}{(j-1)!}\partial^{j-1}_{p_1}\varphi_1,  \label{phi'}\\
& \omega_1=(\ref{omega})|_{i=1},~~ \varphi_j=\frac{1}{(j-1)!}\partial^{j-1}_{p_1}\omega_1,  \label{omega'}\\
& \theta_1=(\ref{phi})|_{i=1},~~ \theta_j=\frac{1}{(j-1)!}\partial^{j-1}_{p_1}\theta_1,  \label{theta'}
\end{align}
\end{subequations}
for $j=2,3,\cdots$.
Note that these vectors are connected to $\psi$ composed of \eqref{psi'} by
\begin{equation}
\psi=\bT[a^2-p_1^2]\varphi =\bT[b^2-p_1^2]\omega =\bT[c^2-p_1^2]\theta.
\end{equation}
Then we have
\begin{align*}
\tau(\psi((n,a),(m,b),(l,c)))& =\bT[a^2-p_1^2]\,\tau(\varphi((n,a),m,l))=(p_1^2-a^2)^n \tau(\psi((-n,-a),m,l))\\
                                     &  =\bT[b^2-p_1^2]\, \tau(\omega(n,(m,b),l)=(p_1^2-b^2)^m \tau(\psi(n,(-m,-b),l))\\
                                     &  =\bT[c^2-p_1^2]\, \tau(\theta(n,m,(l,c))=(p_1^2-c^2)^l \tau(\psi(n,m,(-l,-c))),
\end{align*}
which are in the form of the extended symmetries \eqref{sym'}.

Let us sum up the above discussion by the following theorem.
\begin{theorem}\label{T:4}
The function $\tau(\psi)$ composed of \eqref{psi'} provides a multiple pole solution
to the dAKP equation \eqref{AKP} as well as the dBKP equation \eqref{BKP}.
\end{theorem}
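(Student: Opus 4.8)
The plan is to reduce Theorem \ref{T:4} to Theorem \ref{T:3} by showing that the degenerate Casoratian $\tau(\psi)$ built from \eqref{psi'} enjoys exactly the same extended symmetries \eqref{sym'} as the generic Casoratian of Theorem \ref{T:3}, after which the gauge property of Hirota's discrete bilinear equations finishes the argument. The essential observation is that $\psi$ composed of \eqref{psi'} is obtained from the auxiliary vectors $\varphi,\omega,\theta$ of \eqref{phi'}--\eqref{theta'} by left multiplication by a lower triangular Toeplitz matrix, i.e. $\psi=\bT[a^2-p_1^2]\varphi=\bT[b^2-p_1^2]\omega=\bT[c^2-p_1^2]\theta$; since $\det\bT[f(p)]=f(p)^N$ (the diagonal entry to the $N$-th power) and since an LTTM commutes with the shift $E_{n}$ applied entrywise (the Toeplitz generator is independent of $n,m,l$), the determinant factors out cleanly and one gets the displayed identities
\[
\tau(\psi((n,a),m,l))=(p_1^2-a^2)^n\,\tau(\psi((-n,-a),m,l)),
\]
and the parallel ones in $m$ and $l$.

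First I would recall from \cite{ShiZ-SIGMA-2011} that $\tau(\psi)$ with \eqref{psi'} indeed solves the deformed dAKP equation \eqref{ssym-b}; this is the degenerate analogue of the simple-pole statement used in Theorem \ref{T:3}, obtained by applying the limit $p_j\to p_1$ (equivalently differentiating in $p_1$) to the generic Casoratian. Second, I would verify the three factorizations $\psi=\bT[a^2-p_1^2]\varphi$ etc.: each is a purely algebraic identity following from the Leibniz rule, because $\psi_1=(a^2-p_1^2)^{?}$-type reasoning shows $\psi_1=(a^2-p_1^2)$ times the corresponding entry of $\varphi_1$ only after noting the explicit plane-wave structure \eqref{psi} versus \eqref{phi} — more precisely $(a+p)^n=(a^2-p^2)^n(a-p)^{-n}$, so $\psi_i = (a^2-p_i^2)^n\varphi_i$ at the level of single entries, and differentiating $j-1$ times in $p_1$ produces exactly the Toeplitz convolution $\bT[(a^2-p_1^2)^n]$ acting on $\varphi$. (The paper writes $\bT[a^2-p_1^2]$ for brevity but the generating function is the $n$-th power; I would state this carefully.) Third, using $\det(\bT[f]\,V)=f^N\det V$ for any matrix $V$ whose columns are acted on, together with the fact that $\tau(\varphi(\cdots))=\tau(\psi(\cdots))$ up to the reflection $n\mapsto -n$ (which is the content of how $\varphi$ was designed, mirroring the $\varphi,\omega,\theta$ of Theorem \ref{T:3}), I obtain the extended symmetries \eqref{sym'}.

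The main obstacle I anticipate is bookkeeping the reflection step: one must check that replacing $\varphi_i$ by its value and re-expressing it in terms of $\psi$ evaluated at $(-n,-a)$ goes through in the degenerate case, i.e. that differentiation in $p_1$ commutes with the reflection $n\mapsto -n$, $a\mapsto -a$, $p_1\mapsto$ (unchanged). This is true because the reflection acts only on the pair $(n,a)$ and not on $p_1$, so $\partial_{p_1}$ passes through it; but one should also confirm that the column space of the Casoratian is genuinely preserved, which is where one invokes the alternative representation \eqref{tau-C-n} using $\t\psi-\b\psi=(a-c)\psi$ to switch between $l$-shift and $n$-shift constructions so that the symmetry in the $c$-direction can be read off the same way as in the $a$-direction. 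Once the three extended symmetries \eqref{sym'-a}--\eqref{sym'-c} are in hand, Remark \ref{R:2} (the gauge invariance of \eqref{BKPA}) gives that $\tau(\psi)$ with \eqref{psi'} solves the dBKP equation \eqref{BKP} simultaneously with the dAKP equation \eqref{AKP}, which is the assertion of the theorem.
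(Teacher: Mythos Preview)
Your proposal is correct and follows essentially the same approach as the paper: establish the extended symmetries \eqref{sym'} for the degenerate Casoratian by factoring the LTTM out of the determinant via $\psi=\bT\,\varphi=\bT\,\omega=\bT\,\theta$, then invoke Remark~\ref{R:2} together with the fact (from \cite{ShiZ-SIGMA-2011}) that this $\tau(\psi)$ solves the reflected equation \eqref{ssym-b}. Your observation that the Toeplitz generator should really be $(a^2-p_1^2)^n$ rather than $a^2-p_1^2$, and that the relation $\t\psi-\b\psi=(a-c)\psi$ survives differentiation in $p_1$ so that the $n$-shift representation \eqref{tau-C-n} is available for the $c$-direction symmetry, are both useful clarifications of points the paper leaves implicit.
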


\subsection{Polynomial solutions}\label{sec-3-3}

The dAKP equation has polynomial solutions (cf.\cite{Nimmo-JPA-1997}).
Explicit form of these solutions can be described as the following.
\begin{lemma}\label{L:3}\cite{ZhaZ-JNMP-2019}
Let
\begin{equation}
\psi^+_0 =\varrho_0 (1+p/a)^n(1+p/b)^m(1+p/c)^l(1+p)^s,
\label{psi-0}
\end{equation}
where
\[     \varrho_0=  \frac{1}{2}\mathrm{e}^{-\sum^{\infty}_{j=1}\frac{(- p )^{j}}{j}\gamma_j},
~~ \gamma_j\in \mathbb{C}.\]
Then,
\begin{align*}
\psi_0^+= &  \frac{1}{2}\sum^{\infty}_{j=0}\phi_j p^j,~~
(\phi_j= \frac{2}{j!}\partial^{j}_{p} \psi^{+}_0|_{p=0})\\
=& \frac{1}{2}\exp \bigg[{-\sum^{\infty}_{j=1}\frac{(-p)^{j}}{j} \check{x}_j }  \bigg],
\end{align*}
where
\begin{equation}
\check{x}_j=x_j+s,~~ x_j=n a^{-j} +m b^{-j}+l c^{-j} + \gamma_j.
\label{x-j}
\end{equation}
$\phi_j=\phi_j(n,m,l,s)$ can be expressed in terms of $x_j$ by
\begin{equation}
\phi_j = (- 1)^j\sum_{||\mu||=j}(-1)^{|\mu|}\frac{\check{\mathbf{x}}^{\mu}}{\mu!}
\end{equation}
where
\begin{align*}
&\mu =(\mu_1,\mu_2,\cdots),~~ \mu_j\in \{0, 1, 2,\cdots\},~~ ||\mu||=\!\sum^{\infty}_{j=1}j\mu_j,  \\
&|\mu|=\!\sum^{\infty}_{j=1}\mu_j,~~\mu ! =\mu_1!\cdot \mu_2!\cdots,~~
{\check{\mathbf{x}}}^{\mu}\! =\left(\frac{\check x_1}{1}\right)^{\mu_1}\left( \frac{\check x_2}{2}\right)^{\mu_2}\cdots.
\end{align*}
The first few  $\phi_j$ are
\begin{align*}
& \phi_0=1,~~ \phi_1=\check{x}_1,~~ \phi_2= \frac{1}{2}(\check{x}_1^2 -   \check{x}_2),~~
  \phi_3=\frac{1}{6}(\check{x}_1^3 - 3 \check{x}_1 \check{x}_2 + 2 \check{x}_3),\\
& \phi_{4}=\frac{1}{24}(\check{x}_1^4-6\check{x}_1^2\check{x}_2
+ 8 \check{x}_1\check{x}_3 +3 \check{x}_2^2 -6 \check{x}_4).
\end{align*}
Define
\begin{equation}
\phi=(\phi_1, \phi_3, \phi_5, \cdots, \phi_{2N-1})^{\mathrm{T}}.
\label{pphi}
\end{equation}
The Casoratian
\begin{equation}
\tau_N(\phi)=|\phi(n,m,l,0), \phi(n,m,l,1), \phi(n,m,l,2), \cdots, \phi(n,m,l,N-1)|
\label{tau-p}
\end{equation}
is a solution of the deformed dAKP equation \eqref{ssym-c} (i.e. Eq.(3.15) in \cite{ZhaZ-JNMP-2019}).
\end{lemma}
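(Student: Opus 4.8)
The plan is to verify \eqref{ssym-c} directly for $\tau=\tau_N(\phi)$ by the Casoratian (Freeman--Nimmo) technique, starting from the generating-function description of the entries given in the Lemma. From $\sum_{j\ge0}\phi_j p^j=2\psi_0^+=(1+p/a)^n(1+p/b)^m(1+p/c)^l(1+p)^s\,\mathrm{e}^{-\sum_{k\ge1}(-p)^k\gamma_k/k}$, the shifts $\t{\ },\h{\ },\b{\ }$ and $E_s$ act on the generating function by multiplication by $1+p/a$, $1+p/b$, $1+p/c$, $1+p$, hence on the coefficients
\[ \t\phi_j=\phi_j+a^{-1}\phi_{j-1},\quad \h\phi_j=\phi_j+b^{-1}\phi_{j-1},\quad \b\phi_j=\phi_j+c^{-1}\phi_{j-1},\quad E_s\phi_j=\phi_j+\phi_{j-1}, \]
with $\phi_0=1$ and $\phi_j=0$ for $j<0$. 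Eliminating $\phi_{j-1}$ between the $E_s$-relation and each of the others yields the convenient identities, valid for the column vector $\phi=(\phi_1,\phi_3,\dots,\phi_{2N-1})^{\mathrm{T}}$ itself,
\[ \t\phi=(1-a^{-1})\phi+a^{-1}E_s\phi,\qquad \h\phi=(1-b^{-1})\phi+b^{-1}E_s\phi,\qquad \b\phi=(1-c^{-1})\phi+c^{-1}E_s\phi, \]
so every lattice shift of $\phi$ is a fixed linear combination of $\phi$ and its $s$-shift. (To reproduce \eqref{ssym-c} with the coefficients exactly as printed one will in addition need the reparametrisation \eqref{abc} and/or a gauge $\tau\to\gamma A^nB^mC^l\tau$ as permitted by Remark \ref{R:2}; I keep this in mind and suppress it below.)

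Next, write $c_r:=E_s^r\phi$, so that \eqref{tau-p} is the Casoratian $\tau_N=|c_0,c_1,\dots,c_{N-1}|$. By the identities above, each of $\t\tau_N,\h\tau_N,\b\tau_N$ and of the multiple shifts $\th{\b\tau_N},\tb{\tau_N},\hb{\tau_N}$ appearing in \eqref{ssym-c} is obtained by replacing every column $c_r$ of $\tau_N$ by a prescribed combination of $c_r,\dots,c_{r+3}$; expanding by multilinearity (equivalently by Cauchy--Binet against the corresponding banded matrices) expresses each as an explicit linear combination, with coefficients rational in $a,b,c$, of the ``bordered'' Casoratians $|c_{r_1},\dots,c_{r_N}|$ with $r_1<\dots<r_N$ drawn from $\{0,1,\dots,N+2\}$. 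Substituting these expressions into $A_3=(a-b)\th{\b\tau_N}\,\tau_N+(b+c)\tb{\tau_N}\,\h\tau_N-(c+a)\hb{\tau_N}\,\t\tau_N$ turns the statement $A_3=0$ into a quadratic relation among the maximal minors of the fixed $N\times(N+3)$ matrix with columns $c_0,\dots,c_{N+2}$, which is then a consequence of the Plücker relations for that matrix; after factoring out the columns common to all the minors it should collapse to a three-term Plücker identity mirroring the shape of \eqref{ssym-c}. Matching the three coefficients so produced with $(a-b),(b+c),-(c+a)$, modulo the reparametrisation noted above, completes the proof.

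The step I expect to be the real obstacle is this last one: because \eqref{ssym-c} contains the triple shift $\th{\b\tau}$, the banded matrices above reach tridiagonal width and the expansions produce genuinely bordered (index-skipping) Casoratians, so the bookkeeping of which trailing columns get raised, the signs incurred when reordering columns into a common block, and the algebra of recognising the collected sum as the correct Plücker relation with the correct coefficients are delicate. Since the statement is precisely Eq.(3.15) of \cite{ZhaZ-JNMP-2019}, an alternative is simply to invoke that reference; the scheme above is how I would reconstruct its proof.
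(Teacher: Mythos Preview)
The paper does not prove this lemma; it is quoted as a known result from \cite{ZhaZ-JNMP-2019}, as the citation immediately after the lemma heading and the parenthetical ``(i.e.\ Eq.(3.15) in \cite{ZhaZ-JNMP-2019})'' make explicit. Your closing sentence---that one may simply invoke that reference---is precisely what the paper does.

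Your sketch therefore goes beyond the paper. The generating-function derivation of the shift rules $\t\phi_j=\phi_j+a^{-1}\phi_{j-1}$ and the elimination giving $\t\phi=(1-a^{-1})\phi+a^{-1}E_s\phi$ (which indeed holds component-wise for the odd-indexed vector because $\phi_{2k-2}=E_s\phi_{2k-1}-\phi_{2k-1}$) are correct, and reducing the bilinear relation to a Pl\"ucker identity among maximal minors of the $N\times(N{+}3)$ array $[c_0,\dots,c_{N+2}]$ is the standard Casoratian mechanism used in \cite{ZhaZ-JNMP-2019} and \cite{HieZ-JPA-2009}. You are right that the final coefficient-matching and column bookkeeping is where all the real work lies, and you have not carried it out; so as written your proposal is a correct plan rather than a completed proof, while the paper itself is content with the citation.
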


Note that $\tau_N(\phi)$ satisfies the superposition formula \cite{ZhaZ-SIGMA-2017}
\begin{equation}
\tau_{N-1}(E_{n_i} \tau_{N+1}) -\tau_{N+1}(E_{n_i} \tau_{N-1})=\frac{1}{a_i}\tau_{N}(E_{n_i} \tau_{N}),
\end{equation}
and provides a discrete analogue of the remarkable Burchnall-Chaundy polynomials (cf.\cite{VesW-JPA-2015}).

Let us look at symmetries of $\tau_N(\phi)$
The first three are
\[
\tau_{1}(\phi)=x_1,  ~~ \tau_{2}(\phi)=\frac{x_1^3-x_3}{3}, ~~
\tau_{3}(\phi)=\frac{1}{45}x_1^6-\frac{1}{9}x_1^3x_3+\frac{1}{5}x_1x_5-\frac{1}{9}x_3^2, \]
which only depend on $\{x_{2i+1}\}$.  For general $N$, it has been proved that
(in Appendix C of \cite{ZhaZ-SIGMA-2017})
\begin{lemma}\label{L:4}
 $\tau_N(\phi)$ depends only on $\{x_1, x_3, \cdots, x_{2N-1}\}$.
\end{lemma}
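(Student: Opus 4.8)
The plan is to reduce the Casoratian \eqref{tau-p} to a determinant in the elementary Schur-type polynomials $\phi_j$ themselves and then to differentiate with respect to the coordinates $x_j$ of \eqref{x-j}. Two facts about the $\phi_j$ are needed, both read off from the generating function in Lemma \ref{L:3}. First, the shift $s\mapsto s+1$ sends $\check x_j\mapsto\check x_j+1$ for every $j$, which multiplies $\sum_{j\ge0}\phi_j p^j$ by $(1+p)$; hence $\phi_j(n,m,l,s+1)=\phi_j(n,m,l,s)+\phi_{j-1}(n,m,l,s)$, i.e.\ the forward difference $\Delta_s$ in the $s$-variable lowers the index, $\Delta_s\phi_j=\phi_{j-1}$ (throughout, $\phi_k=0$ for $k<0$ and $\phi_0=1$). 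Second, differentiating the same generating function in $\check x_r=x_r+s$ gives $\partial_{x_r}\phi_j=\tfrac{(-1)^{r+1}}{r}\,\phi_{j-r}$.

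Using $\Delta_s\phi_j=\phi_{j-1}$, I would apply the standard finite-difference column reduction to \eqref{tau-p} (replacing the later columns by successive differences in $s$), which does not change the determinant and turns the $k$-th column $\phi(n,m,l,k)$ into $\Delta_s^{\,k}\phi(n,m,l,0)$. Since $\Delta_s^{\,k}\phi_j=\phi_{j-k}$ and the entries of $\phi$ are $\phi_1,\phi_3,\dots,\phi_{2N-1}$ by \eqref{pphi}, evaluating at $s=0$ (where $\check x_j=x_j$) yields
\[ \tau_N(\phi)=\det\bigl(\phi_{\,2i-1-k}\bigr)_{1\le i\le N,\ 0\le k\le N-1}, \]
a polynomial in the $x_j$ whose $\phi$-indices never exceed $2N-1$. (For $N=1,2$ this reproduces at once the displayed examples $\tau_1(\phi)=x_1$ and $\tau_2(\phi)=\tfrac13(x_1^3-x_3)$.)

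It then remains to show $\partial_{x_r}\tau_N=0$ whenever $r$ is even or $r\ge 2N$. If $r\ge 2N$, then $\partial_{x_r}$ annihilates every entry $\phi_{2i-1-k}$, whose index is at most $2N-1<r$, hence it annihilates $\tau_N$; this disposes of all coordinates of index $\ge 2N$. Now fix an even $r=2q$ with $1\le q\le N-1$. Differentiating the determinant entrywise and using $\partial_{x_{2q}}\phi_{2i-1-k}=-\tfrac1{2q}\,\phi_{2(i-q)-1-k}$, one gets that $\partial_{x_{2q}}\tau_N$ equals $-\tfrac1{2q}$ times the sum over $i=1,\dots,N$ of the determinant obtained by replacing the $i$-th row of $(\phi_{2i-1-k})$ by $\bigl(\phi_{2(i-q)-1-k}\bigr)_{0\le k\le N-1}$. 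For $i\le q$ this replacement row is identically zero (all its indices $2(i-q)-1-k$ are negative); for $i>q$ it is exactly the $(i-q)$-th row of the matrix, so the determinant has two equal rows. Either way every summand vanishes, so $\partial_{x_{2q}}\tau_N=0$. Combining the two cases, $\tau_N(\phi)$ depends only on $x_1,x_3,\dots,x_{2N-1}$.

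I expect the only delicate point to be the bookkeeping in the column reduction — checking that the iterated difference really turns the $k$-th column into $\Delta_s^{\,k}\phi(n,m,l,0)$, so that only the indices $2i-1-k$ appear and no stray even-index $\phi_j$'s survive. After that, the vanishing is the clean combinatorial observation that differentiating in an even coordinate $x_{2q}$ collapses the $i$-th row of the reduced determinant onto the $(i-q)$-th.
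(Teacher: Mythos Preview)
Your argument is correct. The paper does not actually prove Lemma~\ref{L:4} in the text; it simply cites Appendix~C of \cite{ZhaZ-SIGMA-2017}. Your proof is therefore a genuine, self-contained contribution rather than a reconstruction of something in the paper.

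On the point you flag as delicate: the column reduction is clean once one writes the $s$-shift as $E_s=1+\Delta_s$, so that the $k$-th column satisfies
\[
\phi(n,m,l,k)=E_s^{\,k}\phi(n,m,l,0)=\sum_{j=0}^{k}\binom{k}{j}\,\Delta_s^{\,j}\phi(n,m,l,0).
\]
The change of basis from $\{E_s^{\,k}\}_{k=0}^{N-1}$ to $\{\Delta_s^{\,k}\}_{k=0}^{N-1}$ is given by an upper-triangular Pascal matrix of determinant~$1$, and since $\Delta_s\phi_j=\phi_{j-1}$ holds entrywise one has $\Delta_s^{\,k}\phi_{2i-1}=\phi_{2i-1-k}$ at $s=0$; no stray even-index $\phi_j$'s can appear. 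The remainder of the argument --- triviality for $r\ge 2N$, and for even $r=2q<2N$ the observation that $\partial_{x_{2q}}$ sends row $i$ to (a multiple of) row $i-q$ or to zero --- goes through exactly as you wrote.
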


Thus, from the definition \eqref{x-j} of $x_j$, we immediately find that  $\tau_N(\phi)$ has symmetries \eqref{sym}.
This then leads to polynomial solutions of the dBKP equation.

\begin{theorem}\label{T:5}
The $\tau$ function $\tau(\phi)$ defined by \eqref{tau-p} provides polynomial solutions
for the dAKP equation \eqref{AKP} as well as the dBKP equation \eqref{BKP}.
\end{theorem}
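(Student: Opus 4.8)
The plan is to assemble this result directly from the pieces already established earlier in the paper, since by this point almost all of the work has been done. First I would recall Lemma~\ref{L:3}, which tells us that the Casoratian $\tau_N(\phi)$ built from the column vector $\phi=(\phi_1,\phi_3,\dots,\phi_{2N-1})^{\mathrm{T}}$ solves the deformed dAKP equation \eqref{ssym-c}; and Lemma~\ref{L:4}, which asserts that $\tau_N(\phi)$ is a function of the odd-index variables $\{x_1,x_3,\dots,x_{2N-1}\}$ only. The key observation linking these two facts is the explicit form \eqref{x-j} of $x_j$, namely $x_j=n a^{-j}+m b^{-j}+l c^{-j}+\gamma_j$: under the reflection $(n,a)\mapsto(-n,-a)$ one has $a^{-j}\mapsto(-a)^{-j}=(-1)^j a^{-j}$, so for \emph{odd} $j$ the combined effect of flipping $n$ and $a$ leaves $na^{-j}$ invariant. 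Hence each $x_{2i+1}$ is invariant under \eqref{sym-a}, and likewise under \eqref{sym-b} and \eqref{sym-c}; since $\tau_N(\phi)$ depends only on these odd-index $x_j$ by Lemma~\ref{L:4}, it automatically inherits all three symmetries \eqref{sym}.

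Next I would observe that once $\tau_N(\phi)$ possesses symmetries \eqref{sym}, the remark following \eqref{ssym} (the one-line reflection argument) shows that a solution of the deformed equation \eqref{ssym-c} also solves the original dAKP equation \eqref{AKP}: indeed the reflected equations \eqref{ssym-a}--\eqref{ssym-c} were derived \emph{from} \eqref{AKP} under \eqref{sym}, but the derivation is reversible — applying symmetry \eqref{sym-c} to \eqref{ssym-c} and shifting back in the bar-direction returns \eqref{AKP}. Alternatively, one may simply invoke Remark~\ref{R:1}, which records that $A$ is a linear combination of any two of $A_1,A_2,A_3$, so that vanishing of the deformed equation(s) forces $A=0$. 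Either way, $\tau_N(\phi)$ satisfies \eqref{AKP}.

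Finally, having established that $\tau_N(\phi)$ solves \eqref{AKP} and has symmetries \eqref{sym}, Theorem~\ref{T:1} applies verbatim and yields that $\tau_N(\phi)$ also solves the dBKP equation \eqref{BKP}; explicitly, the identity \eqref{BKPA} (or the simpler \eqref{BKPA-alt}) expresses $B$ as a linear combination of $A,A_1,A_2,A_3$, all of which vanish for $\tau_N(\phi)$. This completes the proof.

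I do not expect a genuine obstacle here, since all the hard analytic content — the Casoratian verification of \eqref{ssym-c} in Lemma~\ref{L:3}, the combinatorial proof that $\tau_N(\phi)$ involves only odd-index coordinates in Lemma~\ref{L:4}, and the algebraic identity of Theorem~\ref{T:1} — has already been imported. The only point requiring a moment of care is the bookkeeping of the reflection on \eqref{x-j}: one must be sure that flipping the spacing parameter \emph{together with} the lattice variable (rather than the lattice variable alone) is what produces the invariance, and that this is exactly the form of symmetry \eqref{sym} demanded by Theorem~\ref{T:1}. That is a matter of matching parities of exponents, not a substantive difficulty.
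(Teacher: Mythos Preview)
Your proposal is correct and follows essentially the same route as the paper, which is extremely terse here: the text immediately preceding Theorem~\ref{T:5} simply notes that Lemma~\ref{L:4} together with the odd-parity of $x_j$ in \eqref{x-j} gives the symmetries \eqref{sym}, and then invokes the earlier machinery. You have spelled out the one step the paper leaves implicit, namely how a solution of the deformed equation \eqref{ssym-c} with symmetries \eqref{sym} becomes a solution of \eqref{AKP}; your reversibility argument (apply \eqref{sym-c} to \eqref{ssym-c} and shift back in the bar-direction) is exactly the inverse of the derivation of \eqref{ssym-c} from \eqref{AKP} and is valid.

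One small caveat on your alternative via Remark~\ref{R:1}: that remark expresses $A$ as a combination of \emph{two} of the $A_i$, so knowing only $A_3=0$ is not by itself enough; you would first need a second deformed equation to vanish. This is not a real problem, since your primary reversibility argument already delivers $A=0$ directly (after which all of $A_1,A_2,A_3$ follow from \eqref{sym}), but the Remark~\ref{R:1} route as phrased is not self-contained.
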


\section{Elliptic case}\label{sec-4}

\subsection{dAKP}

Eq.(2.51) in \cite{YooN-JMP-2013} is a version of  dAKP equation ready for elliptic solitons.
It is written as
\begin{equation}\label{AKP-e}
     \mathcal{E}_0\doteq  \Phi_{a}^{-b}\,\th \tau \b \tau+ \Phi_{b}^{-c}\hb \tau \t \tau  + \Phi_{c}^{-a} \tb \tau \h \tau=0,
\end{equation}
where
\begin{equation}\label{Phi-sig}
\Phi_a(b)=\Phi_a^b=\frac{\sigma(a+b)}{\sigma(a) \sigma(b)}.
\end{equation}
Here and below, $\sigma(z)$, $\zeta(z)$ and $\wp(z)$ are the Weierstrass functions.
Eq.\eqref{AKP-e} can also be written as
\begin{equation}\label{AKP-e-alt}
        \sigma(c) \sigma(a-b)\th\tau \b\tau+ \sigma(a) \sigma(b-c)\hb \tau \t \tau+ \sigma(b) \sigma(c-a)\tb \tau \h \tau=0.
\end{equation}
Note that this is similar to \eqref{AKP-alt}, not to \eqref{AKP}.

The following $\tau$ function is given  as an elliptic soliton solution of the dAKP equation
\eqref{AKP-e},\cite{YooN-JMP-2013}
\begin{equation}\label{tau-ell}
\tau=\sigma(\xi)|\bI+\mathcal{M}|
\end{equation}
where
\begin{equation}
\xi=an+bm+cl+\xi_0  , ~ \xi_0\in \mathbb{C},
\label{xi}
\end{equation}
\begin{subequations}
\begin{align}
&\mathcal{M}=(\mathcal{M}_{ij})_{N\times N},~~ \mathcal{M}_{ij}=\rho_i \mathcal{M}^0_{ij}\nu_j,\\
& \rho_i=(\Phi_a(-\kappa_i))^n (\Phi_b(-\kappa_i))^m(\Phi_c(-\kappa_i))^l
\mathrm{e}^{\zeta(\xi) \kappa_i }\rho^0 (\kappa_i), \\
&\nu_j=(\Phi_a(\kappa'_j))^{-n} (\Phi_b(\kappa'_j))^{-m}(\Phi_c(\kappa'_j))^{-l}
\mathrm{e}^{\zeta(\xi) \kappa'_j }\nu^0 (\kappa'_j), \\
&\mathcal{M}^0_{ij}= \Phi_\xi(\kappa_i+\kappa'_j) \mathrm{e}^{-\zeta(\xi) (\kappa_i+\kappa'_j) },
\end{align}
\end{subequations}
and $\kappa_i,\kappa'_j\in \mathbb{C}$ for $i, j=1,2,\cdots, N$.

\subsection{Symmetries of the $\tau$ function}

To get a $\tau$ function that allows symmetries \eqref{sym}, we take
\begin{equation}\label{cond-sym}
\kappa'_j=\kappa_j, ~~ \rho^0 (\kappa_j)=\nu^0 (\kappa_j),~~ (j=1,2,\cdots, N).
\end{equation}
Obviously, $\xi(-n,-a)=\xi(n,a)$. In addition, with \eqref{cond-sym} we have
\begin{subequations}\label{Mij-sym}
\begin{equation}
\mathcal{M}_{ij}=S_i\Phi_\xi(\kappa_i+\kappa_j) T_j,
\end{equation}
where
\begin{align}
& S_i=(\Phi_a(-\kappa_i))^n (\Phi_b(-\kappa_i))^m(\Phi_c(-\kappa_i))^l \rho^0 (\kappa_i), ~(i=1,2,\cdots, N), \\
&T_j=(\Phi_a(\kappa_j))^{-n} (\Phi_b(\kappa_j))^{-m}(\Phi_c(\kappa_j))^{-l}\rho^0 (\kappa_j), ~(j=1,2,\cdots, N).
\end{align}
\end{subequations}
Noticing that
\[S_i(-n,-a)=A_i^n S_i(n,a),~~ T_i(-n,-a)=A_i^{-n} T_i(n,a),\]
where
\[A_i=-\frac{\sigma^2(a)\sigma^2(\kappa_i)}{\sigma(\kappa_i+a)\sigma(\kappa_i-a)} = \frac{1}{\wp(\kappa_i)-\wp(a)},\]
we then have
\begin{align*}
\tau(-n,-a)&=\sigma(\xi) |\bI+\mathcal{M}|_{(-n,-a)}\\
 &=\sigma(\xi) \mathrm{Det}[\mathrm{Diag}(A_1^n, A_2^n, \cdots, A_N^n)(\bI+\mathcal{M})_{(n,a)}
 \mathrm{Diag}(A_1^{-n}, A_2^{-n}, \cdots, A_N^{-n})]\\
 &= \sigma(\xi)|\bI+\mathcal{M}|_{(n,a)}=\tau(n,a).
\end{align*}
Since the symmetries w.r.t. $m$ and $l$ can be proved similarly, one can conclude that
\begin{lemma}\label{L:5}
 The function $\tau=\sigma(\xi)|\bI+\mathcal{M}|$, where $\mathcal{M}_{ij}$ are defined in \eqref{Mij-sym},
 allows symmetries \eqref{sym}.
\end{lemma}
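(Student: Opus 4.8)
The plan is to verify the three reflections \eqref{sym-a}--\eqref{sym-c} one at a time, exploiting that $\tau=\sigma(\xi)|\bI+\mathcal{M}|$ factors into a scalar $\sigma(\xi)$ and a determinant. For each reflection I would show separately that (i) the prefactor $\sigma(\xi)$ is left invariant, and (ii) $\bI+\mathcal{M}$ is mapped to a matrix similar to itself, so that $|\bI+\mathcal{M}|$ is unchanged. Since the construction \eqref{Mij-sym} is manifestly symmetric under permuting the three pairs $(n,a)$, $(m,b)$, $(l,c)$, it suffices to treat the case $(n,a)\to(-n,-a)$ and then invoke the identical argument for the other two.

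For (i): with $\xi=an+bm+cl+\xi_0$ the substitution $(n,a)\to(-n,-a)$ fixes $\xi$, hence fixes $\sigma(\xi)$ and also $\Phi_\xi(\kappa_i+\kappa_j)$. For (ii): the only analytic input is the elementary Weierstrass identity $\Phi_a(\kappa)\Phi_{-a}(\kappa)=\wp(\kappa)-\wp(a)$ — the computation underlying the stated value $A_i=1/(\wp(\kappa_i)-\wp(a))$ — together with $\wp$ being even. Because in $S_i$ and $T_j$ the $m$- and $l$-plane-wave factors and the constants $\rho^0(\kappa_i)$ are untouched, one reads off $S_i(-n,-a)=A_i^nS_i(n,a)$ and $T_j(-n,-a)=A_j^{-n}T_j(n,a)$, precisely the relations recorded before the lemma. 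Hence $\mathcal{M}_{ij}(-n,-a)=A_i^nA_j^{-n}\mathcal{M}_{ij}(n,a)$, i.e. $(\bI+\mathcal{M})_{(-n,-a)}=D\,(\bI+\mathcal{M})_{(n,a)}\,D^{-1}$ with $D=\mathrm{Diag}(A_1^n,\dots,A_N^n)$ — the diagonal entries being preserved since $A_i^nA_i^{-n}=1$ — so taking determinants and restoring $\sigma(\xi)$ gives $\tau(-n,-a)=\tau(n,a)$, which is \eqref{sym-a}.

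The symmetries \eqref{sym-b} and \eqref{sym-c} follow word for word: for $(m,b)\to(-m,-b)$ use $\Phi_b(\kappa)\Phi_{-b}(\kappa)=\wp(\kappa)-\wp(b)$ and conjugate by $\mathrm{Diag}(B_1^m,\dots,B_N^m)$, $B_i=1/(\wp(\kappa_i)-\wp(b))$; for $(l,c)\to(-l,-c)$ use $\Phi_c(\kappa)\Phi_{-c}(\kappa)=\wp(\kappa)-\wp(c)$ and conjugate by $\mathrm{Diag}(C_1^l,\dots,C_N^l)$, $C_i=1/(\wp(\kappa_i)-\wp(c))$. In each case $\sigma(\xi)$ and $\Phi_\xi(\kappa_i+\kappa_j)$ are fixed, so the same determinant argument closes the proof.

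I do not anticipate a real obstacle: once $\mathcal{M}_{ij}$ is in the reduced form \eqref{Mij-sym} the argument is just ``determinant of a conjugate''. The one point to keep in view — and the reason \eqref{Mij-sym} has this clean shape — is that the factors $\mathrm{e}^{\zeta(\xi)\kappa_i}$ in $\rho_i$ and $\mathrm{e}^{\zeta(\xi)\kappa'_j}$ in $\nu_j$ cancel against $\mathrm{e}^{-\zeta(\xi)(\kappa_i+\kappa'_j)}$ in $\mathcal{M}^0_{ij}$, so that after imposing \eqref{cond-sym} (in particular $\kappa'_j=\kappa_j$, which makes the left and right conjugating diagonals coincide) no residual $\zeta(\xi)$-dependence survives and $\sigma(\xi)$ is the sole $\xi$-dependent factor outside the similarity class. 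After that reduction, only the sign bookkeeping in the exponents of $\Phi_a(\pm\kappa_i)$ under $a\to-a$ remains, which is exactly the step already summarized in the lines preceding the lemma.
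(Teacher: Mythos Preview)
Your proposal is correct and follows essentially the same approach as the paper: show $\sigma(\xi)$ is invariant under $(n,a)\to(-n,-a)$, use the identities $S_i(-n,-a)=A_i^nS_i(n,a)$ and $T_j(-n,-a)=A_j^{-n}T_j(n,a)$ (with $A_i=1/(\wp(\kappa_i)-\wp(a))$) to write $(\bI+\mathcal{M})_{(-n,-a)}$ as a diagonal conjugate of $(\bI+\mathcal{M})_{(n,a)}$, take determinants, and then invoke the symmetry in $(n,a),(m,b),(l,c)$ for the remaining two reflections. Your added remark about the cancellation of the $\mathrm{e}^{\zeta(\xi)\kappa}$ factors is a helpful clarification of why the reduced form \eqref{Mij-sym} has no residual $\zeta(\xi)$-dependence.
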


\subsection{dBKP}

Now that the dAKP $\tau$ function defined in Lemma \ref{L:5} allows symmetries \eqref{sym},
as the counterparts of \eqref{ssym}, from \eqref{AKP-e} we have deformations
\begin{subequations}\label{ssym-e}
\begin{align}
& \mathcal{E}_1\doteq \Phi_{b}^{-c} \,\th{\b \tau}\tau+ \Phi_{c}^{a}\th \tau \b\tau -\Phi_a^b\,\tb \tau \h\tau=0,
\label{ssym-ea}\\
& \mathcal{E}_2\doteq \Phi_c^{-a}\,\th{\b \tau}\tau+\Phi_a^b \hb \tau \t \tau-\Phi_{b}^{c} \th \tau\b \tau=0,
\label{ssym-eb}   \\
& \mathcal{E}_3\doteq \Phi_a^{-b}\,\th{\b \tau}\tau+\Phi_{b}^{c} \tb \tau \h \tau-\Phi_{c}^{a} \hb \tau \t \tau=0,
\label{ssym-ec}
\end{align}
\end{subequations}
where we have made use of $\Phi_a^b=\Phi_b^a=-\Phi_{-a}^{-b}$.
To derive a dBKP equation with elliptic coefficients,
multiplying $\Phi_c^{-a}\Phi_a^{-b}$, $\Phi_a^{-b}\Phi_{b}^{-c}$,
$\Phi_{b}^{-c}\Phi_c^{-a}$ to the three equations in \eqref{ssym-e}, respectively,
and summing them together, we get
\begin{align}
&  \Phi_c^{-a}\Phi_a^{-b}\times \mathcal{E}_1 +\Phi_a^{-b}\Phi_{b}^{-c}\times \mathcal{E}_2
+\Phi_{b}^{-c}\Phi_c^{-a}\times \mathcal{E}_3 \nonumber\\
=~& 3 \Phi_a^{-b}\Phi_{b}^{-c} \Phi_c^{-a} \th{\b \tau} \tau
+ \Phi_a^{-b}\left(\Phi_{c}^{a} \Phi_c^{-a}-\Phi_{b}^{c} \Phi_b^{-c}\right) \th \tau \b \tau  \nonumber  \\
& + \Phi_b^{-c}\left(\Phi_{a}^{b} \Phi_a^{-b}-\Phi_{c}^{a} \Phi_c^{-a}\right)\hb\tau \t \tau
 + \Phi_c^{-a} \left(\Phi_{b}^{c} \Phi_b^{-c}-\Phi_{a}^{b} \Phi_a^{-b} \right)\tb\tau \h \tau =0,
\label{BKP-e1}
\end{align}
i.e.
\begin{align}
& 3 \Phi_a^{-b}\Phi_{b}^{-c} \Phi_c^{-a} \th{\b \tau} \tau
+ \Phi_a^{-b} (2\wp(c)-\wp(a)-\wp(b) ) \th \tau \b \tau  \nonumber  \\
& + \Phi_b^{-c}  (2\wp(a)-\wp(b)-\wp(c) ) \hb\tau \t \tau
 + \Phi_c^{-a} (2\wp(b)-\wp(a)-\wp(c) )  \tb\tau \h \tau = 0,
\label{BKP-e2}
\end{align}
where we have made use of
\begin{equation} \label{wp}
\Phi_x^y \Phi_x^{-y}=\wp(x)-\wp(y).\end{equation}
Finally, multiplying $ (\wp(a)+\wp(b)+\wp(c))$ to the dAKP \eqref{AKP-e} and then adding it to the above equation, we get
\begin{equation}
\mathcal{F}\doteq \Phi_a^{-b}\Phi_{b}^{-c} \Phi_c^{-a} \th{\b \tau} \tau
+ \Phi_a^{-b}  \wp(c)  \th \tau \b \tau  + \Phi_b^{-c} \wp(a)  \hb\tau \t \tau
 + \Phi_c^{-a} \wp(b)   \tb\tau \h \tau = 0,
\label{BKP-ee}
\end{equation}
which is a dBKP equation with elliptic coefficients.
Noting that
\begin{equation}\label{BKPF}
3\mathcal{F}= \Phi_c^{-a}\Phi_a^{-b}\times \mathcal{E}_1 +\Phi_a^{-b}\Phi_{b}^{-c}\times \mathcal{E}_2
+\Phi_{b}^{-c}\Phi_c^{-a}\times \mathcal{E}_3
+(\wp(a)+\wp(b)+\wp(c))\times \mathcal{E}_0,
\end{equation}
we immediately have the following.

\begin{theorem}\label{T:6}
Both the dAKP equation \eqref{AKP-e} and dBKP equation \eqref{BKP-ee} with elliptic coefficients
allow a solution $\tau$ defined in Lemma \ref{L:5}.
\end{theorem}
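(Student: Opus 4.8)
The proof will simply assemble pieces that are by now in place. First, by Lemma~\ref{L:5} the function $\tau=\sigma(\xi)|\bI+\mathcal{M}|$ with $\mathcal{M}_{ij}$ given by \eqref{Mij-sym} carries the three reflection symmetries \eqref{sym}. Second, I would point out that this $\tau$ is nothing but the elliptic soliton $\tau$ function \eqref{tau-ell} evaluated on the restricted parameter locus \eqref{cond-sym} ($\kappa'_j=\kappa_j$, $\rho^0(\kappa_j)=\nu^0(\kappa_j)$); since \eqref{tau-ell} solves the dAKP equation \eqref{AKP-e} as an identity in its construction parameters $\kappa_i,\kappa'_j,\rho^0,\nu^0$ by \cite{YooN-JMP-2013}, that identity persists under the restriction \eqref{cond-sym}, which lies inside the domain where the construction makes sense (one needs only the mild genericity $\kappa_i+\kappa_j\neq 0$ and $\wp(\kappa_i)\neq\wp(a),\wp(b),\wp(c)$ so that $\mathcal{M}$ and the factors $A_i$ of Lemma~\ref{L:5} are well defined). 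Hence our $\tau$ satisfies $\mathcal{E}_0=0$.

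Next, because $\tau$ carries all of \eqref{sym-a}, \eqref{sym-b}, \eqref{sym-c} simultaneously, the same reflect-and-reindex manipulation that turned \eqref{AKP} into \eqref{ssym} in Section~\ref{sec-2-1} — now additionally using the parity relations $\Phi_a^b=\Phi_b^a=-\Phi_{-a}^{-b}$ to flip the constant coefficients of \eqref{AKP-e} — produces the three deformed elliptic dAKP equations \eqref{ssym-e}. Thus $\mathcal{E}_0=\mathcal{E}_1=\mathcal{E}_2=\mathcal{E}_3=0$ for our $\tau$.

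Finally I substitute these four vanishing quantities into the algebraic identity \eqref{BKPF}, which writes $3\mathcal{F}$ as the linear combination $\Phi_c^{-a}\Phi_a^{-b}\,\mathcal{E}_1+\Phi_a^{-b}\Phi_b^{-c}\,\mathcal{E}_2+\Phi_b^{-c}\Phi_c^{-a}\,\mathcal{E}_3+(\wp(a)+\wp(b)+\wp(c))\,\mathcal{E}_0$. The right-hand side vanishes identically, so $\mathcal{F}=0$, i.e.\ $\tau$ solves the elliptic dBKP equation \eqref{BKP-ee}; together with $\mathcal{E}_0=0$ this is the assertion of the theorem. For completeness I would record in the write-up the Weierstrass identity \eqref{wp}, $\Phi_x^y\Phi_x^{-y}=\wp(x)-\wp(y)$, together with its consequence $\Phi_c^{a}\Phi_c^{-a}-\Phi_b^{c}\Phi_b^{-c}=2\wp(c)-\wp(a)-\wp(b)$ and its two cyclic partners, since these are exactly what reduce \eqref{BKP-e1} to \eqref{BKP-e2} and hence underpin \eqref{BKPF}.

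I expect the only genuinely delicate point to be the verification of \eqref{BKPF} itself: one must check that after the $\wp$-substitutions the coefficients of $\th{\b\tau}\tau$, $\th\tau\b\tau$, $\hb\tau\t\tau$, $\tb\tau\h\tau$ collapse to precisely $\Phi_a^{-b}\Phi_b^{-c}\Phi_c^{-a}$, $\Phi_a^{-b}\wp(c)$, $\Phi_b^{-c}\wp(a)$, $\Phi_c^{-a}\wp(b)$, i.e.\ that the $(\wp(a)+\wp(b)+\wp(c))$ contribution carried in by $\mathcal{E}_0$ cancels the cross terms coming from the $\mathcal{E}_i$. This is an identity between Weierstrass functions rather than a polynomial one, so it rests on \eqref{wp} (equivalently the $\sigma$ addition theorem) rather than on bare algebra; the symmetry-to-deformation step and the specialization argument are, by contrast, routine elliptic analogues of the rational-coefficient computations of Section~\ref{sec-2-1}.
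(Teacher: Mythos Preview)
Your proposal is correct and follows essentially the same approach as the paper: the paper's argument is precisely the derivation immediately preceding Theorem~\ref{T:6}, which establishes that the $\tau$ of Lemma~\ref{L:5} solves $\mathcal{E}_0=0$ (as a specialisation of \eqref{tau-ell}), that the symmetries \eqref{sym} then force $\mathcal{E}_1=\mathcal{E}_2=\mathcal{E}_3=0$ via the reflect-and-shift mechanism, and that the linear identity \eqref{BKPF} delivers $\mathcal{F}=0$. Your additional remarks on genericity and on the role of \eqref{wp} in verifying \eqref{BKPF} are welcome elaborations, but the logical skeleton coincides with the paper's.
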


By means of the transformation like \eqref{gauge-bkp}, we can transform \eqref{BKP-ee} to be of
arbitrary nonzero coefficients. We have the following.
\begin{theorem}\label{T:7}
Substituting
\begin{subequations}\label{gauge-bkp-ee}
\begin{equation}
\tau = A^{\frac{ml}{2}}B^{\frac{nl}{2}}C^{\frac{nm}{2}}\times f
\end{equation}
into \eqref{BKP-ee},
where
\begin{equation}\label{ABC}
A=\frac{\wp(b)\wp(c)}{\wp(a)(\Phi_b^c)^2}, ~~
B=\frac{\wp(c)\wp(a)}{\wp(b)(\Phi_c^a)^2},~~
C=\frac{\wp(a)\wp(b)}{\wp(c)(\Phi_a^b)^2},
\end{equation}
\end{subequations}
one can transform Eq.\eqref{BKP-ee} to the following dBKP equation,
\begin{equation}
 \Phi_a^{-b}\Phi_{b}^{-c} \Phi_c^{-a} \th{\b f} f
+ \Phi_a^{-b} \Phi_{b}^{c} \Phi_c^{a}   \th f \b f  + \Phi_b^{-c} \Phi_a^{b} \Phi_c^{a}  \hb f \t f
 + \Phi_c^{-a}  \Phi_a^{b}\Phi_{b}^{c}  \tb f \h f = 0.
\label{BKP-eee}
\end{equation}
Elliptic solutions of this dBKP equation are given by
\begin{equation}\label{tau-ell-eee}
f=A^{-\frac{ml}{2}}B^{-\frac{nl}{2}}C^{-\frac{nm}{2}}\sigma(\xi)|\bI+ \mathcal{M} |,
 \end{equation}
where $\mathcal{M}_{ij}$ are defined in \eqref{Mij-sym} and $A,B,C$ are given in \eqref{ABC}.
\end{theorem}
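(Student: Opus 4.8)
The plan is to verify Theorem \ref{T:7} by a direct substitution argument, exploiting the gauge-covariance of discrete bilinear equations. First I would record the shift behaviour of the gauge prefactor $G \doteq A^{ml/2}B^{nl/2}C^{nm/2}$, where $A,B,C$ are the elliptic constants in \eqref{ABC}. Writing $\tau = G f$, each bilinear product $\tau_I \tau_J$ appearing in \eqref{BKP-ee} picks up a factor $G_I G_J$, and the point is that the ratio $G_I G_J / (G_K G_L)$ between the four terms of the equation is a constant (independent of $n,m,l$), so that dividing through by a common factor turns \eqref{BKP-ee} into a bilinear equation for $f$ with modified constant coefficients. Concretely, I would compute $\widetilde{G}\,\widehat{\overline{G}}$, $\widehat{G}\,\widetilde{\overline{G}}$, $\overline{G}\,\widehat{\widetilde{G}}$ and $G\,\widehat{\widetilde{\overline{G}}}$ in terms of $A,B,C$, read off the three independent ratios, and check that multiplying the coefficients $\Phi_a^{-b}\wp(c)$, $\Phi_b^{-c}\wp(a)$, $\Phi_c^{-a}\wp(b)$, $\Phi_a^{-b}\Phi_b^{-c}\Phi_c^{-a}$ of \eqref{BKP-ee} by these ratios yields exactly the coefficients $\Phi_a^{-b}\Phi_b^{c}\Phi_c^{a}$, $\Phi_b^{-c}\Phi_a^{b}\Phi_c^{a}$, $\Phi_c^{-a}\Phi_a^{b}\Phi_b^{c}$, $\Phi_a^{-b}\Phi_b^{-c}\Phi_c^{-a}$ of \eqref{BKP-eee}, up to one overall common factor.

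The algebraic heart of the computation is the identity \eqref{wp}, $\Phi_x^y\Phi_x^{-y}=\wp(x)-\wp(y)$, together with the already-used $A_i = 1/(\wp(\kappa_i)-\wp(a))$-type relation $A = \wp(b)\wp(c)/(\wp(a)(\Phi_b^c)^2)$ and its cyclic images. I would rewrite each $\wp$ in the coefficients of \eqref{BKP-ee} using $\wp(a) = \Phi_a^b\Phi_a^{-b}$ or a cyclic variant, so that both sides of the desired identity are expressed purely through the $\Phi$-symbols; then the claim reduces to a finite identity among products of $\Phi_x^{\pm y}$, which follows from \eqref{wp} and the antisymmetry/symmetry $\Phi_a^b=\Phi_b^a=-\Phi_{-a}^{-b}$ already noted before \eqref{ssym-e}. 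The second half of the statement, that \eqref{tau-ell-eee} solves \eqref{BKP-eee}, is then immediate: by Lemma \ref{L:5} and Theorem \ref{T:6} the function $\sigma(\xi)|\bI+\mathcal{M}|$ solves \eqref{BKP-ee}, and since $f = G^{-1}\sigma(\xi)|\bI+\mathcal{M}|$ is exactly the gauge transform just analysed, it solves \eqref{BKP-eee}.

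I expect the only real obstacle to be bookkeeping: keeping the three lattice directions and the six sign-variants $\Phi_a^{\pm b},\Phi_b^{\pm c},\Phi_c^{\pm a}$ straight while matching coefficients, and making sure the $ml/2$, $nl/2$, $nm/2$ exponents produce half-integer powers that nonetheless combine into the honest constants $A,B,C$ (this is the same subtlety already present in \eqref{gauge-bkp}, where square roots appear but the equation is still well defined up to a global gauge). A clean way to organise this is to introduce the three ratios $\rho_1 = \widehat{\widetilde{\overline{G}}}\,G/(\widetilde{G}\,\widehat{\overline{G}})$, $\rho_2 = \widehat{\widetilde{\overline{G}}}\,G/(\widehat{G}\,\widetilde{\overline{G}})$, $\rho_3 = \widehat{\widetilde{\overline{G}}}\,G/(\overline{G}\,\widehat{\widetilde{G}})$, verify $\rho_1 = B C/(A\,\text{(something)})$ type formulas directly from \eqref{ABC}, and then check the three scalar equations $\rho_k \cdot (\text{coeff}_k \text{ of }\eqref{BKP-ee}) = \text{coeff}_k \text{ of }\eqref{BKP-eee}$ one at a time. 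No genuinely new idea beyond \eqref{wp} and the elliptic addition formula should be needed; the proof is a verification.
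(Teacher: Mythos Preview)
Your plan is correct and is precisely the computation the paper has in mind: the paper does not give a proof of Theorem~\ref{T:7} at all, but simply invokes the general gauge principle \eqref{gauge-bkp} for the generic dBKP equation \eqref{BKP-z} and states the result. Your proposal is the natural unpacking of that one-line justification---compute the four products $G\,\widehat{\widetilde{\overline G}}$, $\widetilde G\,\widehat{\overline G}$, $\widehat G\,\widetilde{\overline G}$, $\overline G\,\widehat{\widetilde G}$, observe that the ratios are $(BC)^{1/2}$, $(AC)^{1/2}$, $(AB)^{1/2}$, and then use the definitions \eqref{ABC} to see that, e.g., $AB=\wp(c)^2/\bigl((\Phi_b^c)^2(\Phi_c^a)^2\bigr)$, so that $\Phi_a^{-b}\wp(c)(AB)^{-1/2}=\Phi_a^{-b}\Phi_b^c\Phi_c^a$, exactly the desired coefficient of \eqref{BKP-eee}. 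The identity \eqref{wp} is in fact not even needed here; the $\wp$'s cancel directly once you multiply the $A,B,C$ in pairs. The second assertion about \eqref{tau-ell-eee} is, as you say, immediate from Theorem~\ref{T:6}.

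One small caution on your bookkeeping remark: the square-root ambiguity in $(AB)^{1/2}$ etc.\ is genuine but harmless---any consistent choice of branch for $A^{1/2},B^{1/2},C^{1/2}$ yields an equation that differs from \eqref{BKP-eee} at most by signs on some terms, and such sign changes are themselves absorbed by a further gauge $f\mapsto (\pm1)^{ml}(\pm1)^{nl}(\pm1)^{nm}f$. So there is no hidden obstruction.
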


\begin{remark}\label{R:3}
$\mathcal{E}_i$ $(i=0,1,2,3)$ can be expressed as linear combinations of any two elements of
$\{\mathcal{E}_0, \mathcal{E}_1, \mathcal{E}_2, \mathcal{E}_3\}$,
for example,
\begin{align}
 \Phi_{a}^{b}\mathcal{E}_0=-\Phi^{-a}_{c}\mathcal{E}_1+ \Phi^{-c}_{b}\mathcal{E}_2 ,
 ~~ \Phi_{a}^{b}\mathcal{E}_3=-\Phi_{b}^{c}\mathcal{E}_1- \Phi_{c}^{a}\mathcal{E}_2.\label{ck1}
\end{align}
This also means there are alternative expressions of \eqref{BKPF}, e.g.
\begin{align*}
& \Phi_{a}^{b} \mathcal{F}  =-\Phi^{-a}_{c}\wp(b)\mathcal{E}_1+ \Phi^{-c}_{b}\wp(a)\mathcal{E}_2,\\
&  \mathcal{F}  =\wp(a)\mathcal{E}_0+ \Phi^{-a}_{c}\Phi^{-b}_{a}\mathcal{E}_1.
\end{align*}
Note that to obtain \eqref{ck1} we need to make use of a special case $(v=0)$ of the well-known identity
\begin{align}
&\sigma(x+y) \sigma(x-y) \sigma(u+v) \sigma(u-v) \nonumber \\
=~& \sigma(x+u) \sigma(x-u)
\sigma(y+v) \sigma(y-v)
- \sigma(x+v) \sigma(x-v) \sigma(y+u) \sigma(y-u).
\label{sigma}
\end{align}
\end{remark}

Finally, let us back to the equation dAKP \eqref{AKP-e-alt} and dBKP \eqref{BKP-ee}.
Both of them have the simplest solution
$\tau=\sigma(\xi)$ where $\xi$ is given in \eqref{xi}.
In this case \eqref{AKP-e-alt} is related to the  identity \eqref{sigma}
by (cf.\cite{NijA-IMRN-2010,YooN-JMP-2013})
\[x=\frac{1}{2}(a-b+c),~y=\frac{1}{2}(c-a+b), ~u=\xi+\frac{1}{2}(a+b+c),~ v=\frac{1}{2}(a+b-c),\]
or, equivalently,
\begin{equation}
a=x+v,~ b=y+v,~ c=x+y, ~ \xi=u-x-y-v.
\label{abcxi}
\end{equation}
When $\tau=\sigma(\xi)$,
since the three equations in \eqref{ssym-e} that are used to derive \eqref{BKP-ee}
are essentially the identity \eqref{sigma} with reparameters of $x,y,u$ and $v$,
we can substitute \eqref{abcxi} into the dBKP \eqref{BKP-ee} where $\tau=\sigma(\xi)$,
and we arrive at the the following equality
\begin{align}
& \sigma(x-y)\sigma(v-x)\sigma(y-v)\sigma(u+x+y+v)\sigma(u-x-y-v) \nonumber\\
+~& \wp(x+v)\sigma^2(v+x)\sigma(y+v)\sigma(x+y)\sigma(v-x)\sigma(u-y)\sigma(u+y) \nonumber \\
+~& \wp(y+v)\sigma^2(y+v)\sigma(v+x)\sigma(x+y)\sigma(y-v)\sigma(u-x)\sigma(u+x) \nonumber \\
+~& \wp(x+y)\sigma^2(x+y)\sigma(v+x)\sigma(y+v)\sigma(x-y)\sigma(u-v)\sigma(u+v) \nonumber \\
=~& 0. \label{sigma-2}
\end{align}

The latter elliptic identity is a consequence of the derivation of the dBKP equation, relying on the
fact that $\tau=\sigma(\xi)$ is a solution, and obtained by substituting \eqref{abcxi} and
\begin{align*}
& a-b=x-y,~ b-c=v-x,~ c-a=y-v,\\
& a+b=v+(x+y+v),~ a+c=x+(x+y+v),~ b+c=y+(x+y+v),\\
& \tau=\sigma(\xi)=\sigma(u-(x+y+v)),~ \th{\b\tau}=\sigma(u+(x+y+v)),\\
& \t\tau=\sigma(u-y),~ \h\tau=\sigma(u-x),~ \b\tau=\sigma(u-v),\\
& \hb \tau=\sigma(u+y),~ \tb\tau=\sigma(u+x),~ \th\tau=\sigma(u+v).
\end{align*}
The identity \eqref{sigma-2} demonstrates that at a basic level, the discrete equations can be interpreted
as addition formulae, albeit of a special type, for the relevant functions; in the present case for the
Weierstrass $\sigma$ and $\wp$ functions.

\section{Concluding remarks}\label{sec-5}

We have shown that under  special (i.e. reflection) symmetries \eqref{sym} the dBKP equation \eqref{BKP}
can be expressed as a linear combination of the dAKP equation \eqref{AKP}
and their reflected symmetric forms \eqref{ssym}.
This leads to a  common subset of solution spaces of the dAKP equation \eqref{AKP}
and  the dBKP equation \eqref{BKP},
which is different from the Pfaffian type link \eqref{Pfaffian} by coordinates reduction.
As argued earlier, in a sense such solutions are reducible, as they obey simultaneously two
different partial difference equations, each of which allow for in principle different solution classes.
Nonetheless, we conjecture that these solutions give an insight into the elliptic parametrisation of the
dBKP equation which was unkown hitherto.

We also checked the  case of 4D equations, comprising the higher-order equations in the relevant
dKP hierarchies. Since the 4D dAKP \eqref{4dAKP} and 4D dBKP \eqref{BKP-4D}
are direct results of the 4D consistency of the dAKP and dBKP, respectively,
the two 4D lattice equations allow symmetric $\tau$ function solutions as well.
In addition, the 4D dBKP is connected to the 4D dAKP as a linear combination of the latter and its symmetric deformations.

It is also remarkable that the plane wave factors entering in the symmetric $\tau$ functions in Sec.\ref{sec-3}
(e.g. \eqref{psi})
coincide with the plane wave factors, and their corresponding multidimensional extensions,
 of the ABS list of lattice equations, \cite{AdlBS-CMP-2003}, in the
parametrisation that allows their uniform treatment of soliton solutions, cf.\cite{NijAH-JPA-2009,ZhaZ-SAPM-2013}.
This explains why the dAKP and its reflected symmetric forms frequently
(sometimes simultaneously) appear in the bilinearisations of the ABS
equations, e.g. (4.7) in \cite{AtkHN-JPA-2008}, (5.20a,b) in \cite{HieZ-JPA-2009}, and (3.15) in \cite{ZhaZ-JNMP-2019}.
This also implies a possible yet uncovered link between the dBKP equation and ABS lattice equations.
Finally, in Sec.\ref{sec-4} we explored the elliptic version of the dBKP equation
and we obtained a  parametrisation of the  dBKP equation \eqref{BKP-ee} with elliptic coefficients
and its gauge equivalent form \eqref{BKP-eee}.  Especially in this elliptic case it would be interesting to
establish whether there are non-symmetric solutions of the KP equation that would obey a relation of the type
\eqref{Pfaffian}. This will be a subject for future investigations.

\vskip 25pt
\subsection*{Acknowledgments}
This project is  supported by the NSF of China (Nos.11875040 and 11631007) and Shanghai Sailing Program (No. 20YF1433000).

\end{document}